\renewcommand{\theequation}{\arabic{equation}}
\newcommand{\eqnref}[1]{%
    \ifnum\pdfstrcmp{(}{\unexpanded\expandafter{\@car#1\@nil}}=0
        Equation~\ref{#1}%
    \else
        Equation~(\ref{#1})%
    \fi
}
\theoremstyle{plain}
\newtheorem{theorem}{Theorem}[section]
\newtheorem*{thm}{Theorem}
\theoremstyle{definition}
\newtheorem{definition}[theorem]{Definition}
\newtheorem{assumption}[theorem]{Assumption}
\newtheorem{question}[theorem]{Question}
\theoremstyle{remark}
\definecolor{FillColorCBBlue}{HTML}{EDF9FF}
\definecolor{FillColorCBGreen}{HTML}{EDFFFA}
\definecolor{FillColorCBOrange}{HTML}{FFF5ED}
\definecolor{FillColorCBPurple}{HTML}{FFF8FD}
\definecolor{FillColorCBYellow}{HTML}{FFFEF1}
\definecolor{CustomIvory}{HTML}{FCFCED}
\definecolor{FontColorBlue}{HTML}{0433FF}
\definecolor{FontColorGreen}{HTML}{009051}
\definecolor{FontColorOrange}{HTML}{941100}
\definecolor{FontColorPurple}{HTML}{FF2F92}
\definecolor{FontColorRed}{HTML}{B83944}
\definecolor{FontColorYellow}{HTML}{929000}
\definecolor{FontColorCBGreen}{HTML}{3F8D56}
\definecolor{FontColorCBRed}{HTML}{921B12}
\definecolor{StrokeColorCBBlue}{HTML}{0173B2}
\definecolor{StrokeColorCBGreen}{HTML}{029E73}
\definecolor{StrokeColorCBOrange}{HTML}{D55E00}
\definecolor{StrokeColorCBPurple}{HTML}{CC78BC}
\definecolor{StrokeColorCBYellow}{HTML}{C6BD2B}
\definecolor{RevisionBorder}{HTML}{D1E7FF}
\definecolor{RevisionHighlight}{HTML}{FEECB4}
\definecolor{DarkBlue}{HTML}{2A4398}
\definecolor{vruleGray}{gray}{0.75}
\definecolor{RevisionText}{HTML}{000000}  
\newcolumntype{L}[1]{>{\raggedright\let\newline\\\arraybackslash\hspace{0pt}}m{#1\textwidth-2\tabcolsep}}
\newcolumntype{C}[1]{>{\centering\let\newline\\\arraybackslash\hspace{0pt}}m{#1\textwidth-2\tabcolsep}}
\newcolumntype{R}[1]{>{\raggedleft\let\newline\\\arraybackslash\hspace{0pt}}m{#1\textwidth-2\tabcolsep}}
\newcommand{\Acal}{\mathcal{A}}
\newcommand{\cmark}{\color{StrokeColorCBGreen} \small \ding{51}}
\newcommand{\xmark}{\color{FontColorCBRed} \scriptsize \ding{56}}
\newcommand*{\indep}{
    \mathbin{
        \mathpalette{\@indep}{}
    }
}
\newcommand*{\nindep}{
    \mathbin{
        \mathpalette{\@indep}{\not}
    }
}
\newcommand*{\@indep}[2]{%
    \sbox0{$#1\perp\m@th$}
    \sbox2{$#1=$}
    \sbox4{$#1\vcenter{}$}
    \rlap{\copy0}
    \dimen@=\dimexpr\ht2-\ht4-.2pt\relax
    \kern\dimen@
    {#2}%
    \kern\dimen@
    \copy0 
}
\setlist[itemize]{topsep=0pt, noitemsep, leftmargin=*}
\setlist[enumerate]{leftmargin=*}  
\definecolor{takeaway_box_bg}{RGB}{239, 239, 239}
\icmltitlerunning{Position: Beyond Sensitive Attributes, ML Fairness Should Quantify Structural Injustice via Social Determinants}
\begin{document}

\twocolumn[
    \icmltitle{Position: Beyond Sensitive Attributes, ML Fairness Should Quantify \\Structural Injustice via Social Determinants}



    \icmlsetsymbol{equal}{$\dagger$}

    \begin{icmlauthorlist}
        \icmlauthor{Zeyu Tang}{stanford}
        \icmlauthor{Alex John London}{cmu}
        \icmlauthor{Atoosa Kasirzadeh}{cmu}
        \icmlauthor{Sarah Stewart de Ramirez}{osf}\\
        \icmlauthor{Peter Spirtes}{equal,cmu}
        \icmlauthor{Kun Zhang}{equal,cmu,mbzuai}
        \icmlauthor{Sanmi Koyejo}{equal,stanford}
    \end{icmlauthorlist}

    \icmlaffiliation{stanford}{Stanford University}
    \icmlaffiliation{cmu}{Carnegie Mellon University}
    \icmlaffiliation{osf}{OSF HealthCare}
    \icmlaffiliation{mbzuai}{Mohamed bin Zayed University of Artificial Intelligence}

    \icmlcorrespondingauthor{Zeyu Tang}{\href{mailto:zeyu@cs.stanford.edu}{\texttt{zeyu@cs.stanford.edu}}}

    \icmlkeywords{Machine Learning, ICML}

    \vskip 0.3in
]



\printAffiliationsAndNotice{\textsuperscript{$\dagger$}Equal senior authorship.}  


\begin{abstract}
    \looseness=-1
    Algorithmic fairness research has largely framed \emph{unfairness as discrimination} along \emph{sensitive attributes}. However, this approach limits visibility into \emph{unfairness as structural injustice} instantiated through \emph{social determinants}, which are contextual variables that shape attributes and outcomes without pertaining to specific individuals. \textbf{This position paper argues that the field should quantify structural injustice via social determinants, beyond sensitive attributes.} Drawing on cross-disciplinary insights, we argue that prevailing technical paradigms fail to adequately capture unfairness as structural injustice, because contexts are potentially treated as noise to be normalized rather than signal to be audited. We further demonstrate the practical urgency of this shift through a theoretical model of college admissions, a demographic study using U.S. census data, and a high-stakes domain application regarding breast cancer screening within an integrated U.S. healthcare system. Our results indicate that mitigation strategies centered solely on sensitive attributes can introduce new forms of structural injustice. We contend that auditing structural injustice through social determinants must precede mitigation, and call for new technical developments that move beyond sensitive-attribute-centered notions of fairness as non-discrimination.
\end{abstract}

\section{Introduction}\label{main:introduction}
\looseness=-1
Structural injustice refers to circumstances in which social practices, social structures, or the environment reinforce and compound prior histories of injustice \citep{carmichael1967black,sowell1972black,young1990justice,tilly1998durable,rothstein2017color,alexander2020new}.
We use the term ``social determinants'' to denote the specific, measurable aspects of these practices, structures, or environments (i.e., the variables or features) through which structural injustice manifests and can be systematically audited \citep{gee2011structural,yearby2018racial,robinson2020teaching,yearby2022structural,chetty2024changing}.
In contrast to auditing \emph{unfairness as structural injustice} instantiated through social determinants, the machine learning (ML) fairness literature has largely operationalized auditing \emph{unfairness as discrimination} along sensitive attributes, e.g., race, sex, gender, and age \citep{romei2014multidisciplinary,loftus2018causal,corbett2018measure,mitchell2018prediction,narayanan2018translation,verma2018fairness,caton2020fairness,chouldechova2020snapshot,makhlouf2020survey,mehrabi2021survey,zhang2021fairness,pessach2022review,tang2023and}.
\textbf{In this position paper, we argue that beyond sensitive attributes, ML fairness should quantify structural injustice via social determinants.}


Because social determinants are features of places, institutions, policies, or practices, unfairness as structural injustice persists even if animuses that cause unjust treatment (unfairness as discrimination) have been subject to significant reform.
Their effects may not be tied directly to demographic group membership but to broader traits (such as income level or job type) or to geographic areas.
As a result, individuals within the \textbf{same} demographic group, depending on their unique circumstances, may experience \textbf{different} levels of (dis)advantage due to intersecting social determinants, e.g., various environmental impacts on health in terms of distance and accessibility to general practitioners and hospitals \citep{comber2011spatial}, Vitamin D and bone health across geographic locations \citep{yeum2016impact}, and geographic disparities of diseases \citep{tan2020location}.
Conversely, individuals from \textbf{different} demographic groups in the same geographic neighborhood may encounter \textbf{similar} impediments, e.g., poverty and pollution in the neighborhood, lack of educational resource in the community \citep{connell1994poverty,tilak2002education,rose2008chronic}.

Auditing constitutes the essential first step in shaping the fairness problem space within which mitigation strategies operate.
To support our position, our argument unfolds as follows.
(I) \textbf{Conceptual Gap}: We identify the gap between cross-disciplinary treatments of social determinants for structural injustice, and their limited engagement in ML fairness research (Section~\ref{main:cross_disciplinary_engagement}).
(II) \textbf{Paradigm Limits}: We show that the prevailing sensitive-attribute-centered technical paradigms are not well suited to quantifying structural injustice via social determinants (Section~\ref{main:limit_technical_paradigm}).
(III) \textbf{Unintended Injustice}: We theoretically demonstrate that mitigation strategies focused on sensitive attributes can introduce additional structural injustice (Section~\ref{main:theoretical_analysis}).
(IV) \textbf{Empirical Evidence}: We provide empirical evidence for the necessity and value of social-determinant-based analysis and intervention (Section~\ref{main:empirical_analyses}).
(V) \textbf{Actionable Reorientation}: We engage alternative perspectives (Section~\ref{main:alternative_view}) and propose concrete actions for addressing the limitations of the current paradigm (Section~\ref{main:conclusion}).

\section{Cross-Disciplinary Engagement with Social Determinants}\label{main:cross_disciplinary_engagement}
In this section, we review and reflect on the cross-disciplinary engagement with social determinants.\footnote{
    Due to space limit, we provide further discussions on related works in Appendix~\ref{supp:related_work}.
}
In Section~\ref{main:cross_disciplinary_engagement:variable_definition}, we clarify definitions of sensitive attributes and social determinants.
In Section~\ref{main:cross_disciplinary_engagement:adjacent_disciplines}, we summarize discussions on social determinants across disciplines.

\subsection{Define Sensitive Attributes and Social Determinants}\label{main:cross_disciplinary_engagement:variable_definition}

\begin{definition}[Sensitive Attributes]\label{def:sensitive_attribute}
    A \emph{sensitive attribute} $A$, also referred to as a \emph{protected feature} or a \emph{social category}, is an intrinsic attribute of the individual that is canonically recognized in law, ethics, or social norms as warranting protection from discrimination or bias.
\end{definition}

\begin{definition}[Social Determinants]\label{def:social_determinant}
    A \emph{social determinant} $S$ is a variable representing an aspect of the data generating process that satisfies:
    \begin{enumerate}[nosep,label=(\arabic*)]
        \item (Context-level definition) $S$ is defined at the level of a context (e.g., a neighborhood, institution, jurisdiction, or policy environment) rather than as an individual attribute. Multiple individuals share the same value of $S$ by virtue of being situated in the same context.
        \item (Social-structural content) $S$ characterizes a condition whose cross-context variation is substantially shaped by social-structural forces, such as resource allocation, institutional policy, or systematic investment. This encompasses both directly social conditions (e.g., school funding) and physical or environmental conditions that are socially patterned (e.g., pollution exposure).
        \item (Exogenous stratification) When $S$ is computed by aggregation over individuals, the grouping over which the aggregation is performed (e.g., a neighborhood boundary, jurisdiction, or institutional membership) is exogenously defined, not derived from the characteristics of the individuals being described.
    \end{enumerate}
\end{definition}

Sensitive attributes are relatively stable identifiers arising from longstanding legal and moral frameworks, that can be uniquely ascribed to an individual, e.g., race, sex, disability status.
Social determinants refer to external conditions that influence an individual's opportunities, behaviors, and outcomes \citep{carmichael1967black,sowell1972black,tilly1998durable,singh2003area,young2008structural,rothstein2017color,kind2018making,powers2019structural,alexander2020new,kasirzadeh2021use,smart2024beyond,sullivan2024explanation}.
Examples of social determinants include environmental impact on health, educational resource in the neighboring area, economic profile of the geolocation, collective values of the community, and so on.

To demonstrate the operationalization of distinguishing social determinants from sensitive attributes and their proxies, let us consider a simplified but realistic data generating process.
Historical redlining channeled Black households into specific neighborhoods, entangling race, zip code, and neighborhood composition through a shared structural history \citep{rothstein2017color}.
As a result, these variables remain strongly correlated over time.
Despite this entanglement, Definition~\ref{def:social_determinant} separates them cleanly (Table~\ref{tab:example_demonstrate_def}).

Specifically, applicant's race is not a social determinant since it's not defined at the context-level.
The zip code of an area is not a social determinant since it is an administrative label rather than a measure of any substantive condition.
However, it is a proxy for social determinants, since it indexes real structural conditions (school funding, air quality, policing intensity) by partitioning space.
This distinction matters for fairness implications: one cannot improve a zip code, but one can improve the school funding or air quality within it.
The racial composition w.r.t. HOLC-redlined district and that w.r.t. zip code area differ in their grouping basis.
HOLC redlining maps drew district boundaries explicitly based on the racial makeup of residents, making the aggregation endogenous to the characteristic being measured, whereas zip code boundaries are postal routes drawn independently of any demographic characteristic.\footnote{
    We provide further discussion on social determinants and also their common presence in Appendix~\ref{supp:sd_addon}.
}
\begin{table*}[t]
    \centering
    \small
    \setlength{\tabcolsep}{4pt}
    \caption{
        Demonstration of the operationalization of Definition~\ref{def:social_determinant}.
    }
    \label{tab:example_demonstrate_def}
    \begin{tabularx}{\linewidth}{X c c c l}
        \toprule
        Variable                                           & \makecell{Context-level\\ definition?} & \makecell{Social-structural\\ content?} & \makecell{Exogenous\\ stratification?} & Categorization                \\
        \midrule
        Applicant's race                                   & \textbf{No}                            & --                                      & --                                     & Sensitive attribute           \\
        Zip code                                           & Yes                                    & \textbf{No}                             & Yes                                    & Not a social determinant      \\
        Racial composition (w.r.t. HOLC-redlined district) & Yes                                    & Yes                                     & \textbf{No}                            & Proxy for sensitive attribute \\
        Racial composition (w.r.t. zip code area)          & Yes                                    & Yes                                     & Yes                                    & Social determinant            \\
        School funding per pupil (zip code area)           & Yes                                    & Yes                                     & Yes                                    & Social determinant            \\
        \bottomrule
    \end{tabularx}
\end{table*}

\subsection{Engagement Across Disciplines}\label{main:cross_disciplinary_engagement:adjacent_disciplines}
In this subsection, we provide a high-level summary of discussions on social determinants from related disciplines of algorithmic fairness, including political philosophy, economics and sociology, and healthcare.

\vspace{-1ex}
\paragraph{\textbf{Political Philosophy}}
In political philosophy, researchers have proposed to shift from a focus on distributive patterns to procedural issues of participation in deliberation and decision-making, and to consider structural injustices that arise from individuals' relations to contextual environments and social institutions \citep{blau1977inequality,giddens1979central,bourdieu1984distinction,young1990justice,young2006responsibility,young2008structural}.
Recent works in algorithmic fairness have urged a shift beyond the localized concerns of distributive justice, advocating for the need to investigate structural injustice \citep{kasirzadeh2022algorithmic}, and to explicitly incorporate procedural inquires into fairness assessments \citep{grgic2018beyond,zimmermann2022proceed,tang2024procedural}.

\vspace{-1ex}
\paragraph{\textbf{Economics and Sociology}}
\looseness=-1
In the effort of quantitatively measuring social determinants, economists and sociologists have proposed various indices to capture the influence of contextual environments on individuals' opportunities and outcomes.
For instance, the (updated) Area Deprivation Index (ADI) and Neighborhood Atlas are developed to rank neighborhoods by socio-economic disadvantage in a region of interest, e.g., at the state or national level \citep{kind2014neighborhood,kind2018making}.
The Index of Concentration at the Extremes (ICE) measures spatial polarization of extreme privilege and deprivation \citep{massey2001prodigal,kubrin2006predicting}.
The Child Opportunity Index (COI) measures the quality of resources and conditions that matter for children's healthy development in the neighborhoods where they live \citep{acevedo2014child}.

\vspace{-1ex}
\paragraph{\textbf{Healthcare}}
The social drivers of health (SDoH) have long been engaged in the healthcare literature.
Researchers have pointed out that one size does not fit all when it comes to the index of socio-economic status in healthcare \citep{braveman2005socioeconomic}, and that the incorporation of SDoH indices necessitates methodological clarity \citep{foryciarz2025incorporating}.
Previous works have found racial/ethnic and geographic variations in distrust of physicians in the U.S. \citep{armstrong2007racial}.
The distinction between healthcare costs and healthcare needs matters when it comes to the choice of target in the development of prediction algorithms \citep{obermeyer2019dissecting}, and so does the distinction between race-based and race-conscious medicine \citep{pallok2019structural,cerdena2020race}.
Most recently, the World Health Organization (WHO) has released a report about the persisting social injustices \citep{who2025world}.

\begin{tcolorbox}[takeaway_box, colback=takeaway_box_bg]
    \textbf{Takeaway:} Despite extensive cross-disciplinary work on structural injustice instantiated through social determinants, treatment within ML fairness remains limited.
\end{tcolorbox}

\section{Limitations of Current Technical Paradigms}\label{main:limit_technical_paradigm}
\looseness=-1
In this section, we argue that current technical paradigms are inadequate for capturing unfairness as structural injustice.

\subsection{Existing Practices and Benchmarks Tend to Drop Attributes Related to Social Determinants}\label{main:limit_technical_paradigm:data_issue}
In addition to individual-level variables, contextual environments have significant influences over the individual \citep{carmichael1967black,sowell1972black,tilly1998durable,singh2003area,young2008structural,rothstein2017color,kind2018making,powers2019structural,alexander2020new,perdomo2025difficult}.
For instance, for the variable \texttt{Address} (or its alternatives), the improvement in physical health was observed in a randomized housing mobility social experiment \citep{ludwig2011neighborhoods}, and the social determinants of health are closely related to individual's residence area \citep{marmot2005social,braveman2014social,robinson2020teaching,yearby2022structural}.

\looseness=-1
However, in ML fairness literature, it is a common practice to omit variables that do not directly pertain to individuals, when performing the prediction or decision-making tasks of interest.
For instance, previous causal fairness approaches do not include address-related variables when modeling the data generating process with a causal graph \citep{kilbertus2017avoiding,kusner2017counterfactual,nabi2018fair,zhang2018fairness,zhang2018equality,chiappa2019path,wu2019pc,imai2020principal,mishler2021fairness,coston2020counterfactual,d2020fairness,nilforoshan2022causal,nabi2022optimal}.
Although the Communities and Crimes dataset \citep{uci_communities_and_crime} initially contains geolocation, such information is dropped during data processing \citep{mary2019fairness}.

Moreover, widely used benchmark datasets typically exclude variables that potentially capture social determinants.
For instance, there is no address information included in the Adult dataset \citep{uci_adult}, which is a standard dataset for evaluation purposes \citep{calders2009building,zemel2013learning,agarwal2018reductions,nabi2018fair,donini2018empirical,agarwal2019fair,baharlouei2020renyi}.
Similar to the attribute list of Adult dataset, the address information is dropped by the Folktables package when retrieving public-use U.S. census data products to construct Adult-like prediction tasks \citep{ding2021retiring}, except for the ACSTravelTime task.


\subsection{(Quasi-)Stable Sensitive Attributes Cannot Capture Shifting Influences from Social Determinants}\label{main:limit_technical_paradigm:stable_shifting}
In terms of the specification of the disadvantaged individuals, previous quantitative approaches in algorithmic fairness literature primarily focus on sensitive attributes.
In addition to fairness notions that are applied one sensitive attribute at a time \citep{calders2009building,zliobaite2011handling,kamiran2013quantifying,hardt2016equality,zafar2017fairness,kilbertus2017avoiding,kusner2017counterfactual,nabi2018fair,chiappa2019path}, intersectional fairness considerations have been introduced to account for the intersection of multiple sensitive attributes through their structural combinations \citep{crenshaw1990mapping,bright2016causally,kearns2018preventing,hoffmann2019fairness,foulds2020intersectional,kong2022intersectionally}.

\looseness=-1
While a structured combination of multiple \emph{(quasi-)stable} sensitive attributes provides a more nuanced characterization of intersecting factors in discrimination, it does not fully capture the \emph{shifting} influence from contextual environments, in static and and dynamic settings.
For instance, individuals with an identical configuration of sensitive attributes, e.g., the group of African American women, can face different levels of structural injustice depending on the contextual environments they are subjected to \citep{armstrong2007racial,obermeyer2019dissecting,pallok2019structural}.

These influences are not intrinsic to the individual.
Instead, they characterize the surrounding environment and dynamically shape the conditions under which outcomes are produced.
Consequently, beyond asking which attributes explain (residual) dependence between individual's relatively stable sensitive attributes and outcomes (e.g., \emph{Conditional Demographic Parity} by \citealt{zliobaite2011handling,kamiran2013quantifying} and further discussed by \citealt{wachter2021fairness}), fairness metrics should also adaptively and dynamically account for benefits/burdens induced by contextual variation such as neighborhood-level differences in healthcare access, exposure, and institutional resources.

\subsection{Mismatch Between Individual-Level Causal Modeling and Community-Level Structure}\label{main:limit_technical_paradigm:mismatch_causal_modeling}
In terms of the modeling of the instantiation of unfairness, previous causal fairness approaches represent discriminations with edges or pathways in the causal graph \citep{kilbertus2017avoiding,kusner2017counterfactual,nabi2018fair,chiappa2019path,wu2019pc,coston2020counterfactual,d2020fairness,nilforoshan2022causal,nabi2022optimal,zuo2022counterfactual}, which typically originate from sensitive attributes of an individual.
Social determinants, especially community- or context-level attributes (e.g., neighborhood deprivation or healthcare access), create a fundamental mismatch with most causal modeling frameworks in ML fairness, which primarily operate at the individual level.

Specifically, individual-level graph formulations do not readily represent community-level variables whose values emerge from shared environments and population composition.
Many social determinants are aggregate statistics over populations, inducing potentially cyclic causal influence between individuals and communities.
Individual-level causal interventions can propagate to the community level, and community conditions in turn shape individual outcomes.
For example, intervening on an individual's income can shift neighborhood socio-economic status, affecting others in the community and violating standard assumptions such as \emph{no interference} \citep{hernan2020causal} that underlie much of causal inference and fairness analysis.

Existing causal fairness notions further reinforce this individual-level structure by analyzing path-specific causal effects within individual-level graphs.
Capturing effects of social determinants requires modeling additional structure across individuals and communities, which standard individual-level models do not encode.
Although some work aggregates individual causal effects across subgroups \citep{coston2020counterfactual,imai2020principal,mishler2021fairness}, the underlying models remain individual-level, without capturing the bidirectional influences between an individual and their communities.
Consequently, current causal fairness approaches are ill-suited to represent unfairness instantiated through collective, community-level social determinants.

\begin{tcolorbox}[takeaway_box, colback=takeaway_box_bg]
    \textbf{Takeaway:} Existing technical paradigms, exemplified by standard benchmarks, sensitive-attribute-centered fairness metrics, and causal models of unfairness as discrimination, are inadequate for capturing structural injustice.
\end{tcolorbox}
\vspace{-1ex}
\section{Tension Between Structural Justice and Sensitive-Attribute-Centered Mitigation}\label{main:theoretical_analysis}
\looseness=-1
To formalize the intuition that identity-based metrics fail to capture structural context, we introduce a stylized model and analyze college admissions as an instance of sensitive-attribute-centered mitigation grounded in U.S. affirmative-action efforts to remedy race-based historical injustice.
We theoretically demonstrate that incorporating social determinants, even through a simple geographic proxy, not only recovers nuances consistent with prior qualitative discussions, but also uncovers a counter-intuitive paradox between sensitive-attribute-centered mitigation and progress in structural justice.
In Section~\ref{main:theoretical_analysis:assumptions}, we present assumptions we use to facilitate closed-formula theoretical analyses.
In Sections~\ref{main:theoretical_analysis:quota_based}, we demonstrate structural injustice implications of a classic sensitive-attribute-centered mitigation strategy.\footnote{
    Additional results and proofs are provided in Appendix~\ref{supp:proofs}.
}

\subsection{Assumptions in Our Analyses}\label{main:theoretical_analysis:assumptions}

\begin{assumption}[Region-Specific Demographic Makeup]\label{asmp:region_specific_makeup}
    Let us denote the sensitive attribute as $A$, where $a \in \Acal$ denotes under-represented minority (URM) applicant group, and $a' \in \Acal$ denotes non-URM applicant group.
    There are two regions where applicants reside in, rich and poor regions, with different demographic compositions,
    \begin{equation*}
        \begin{array}{rcc}
                                      & \text{poor region}       & \text{rich region}       \\
            \text{URM applicants}     & n_a^{(\mathrm{poor})}    & n_a^{(\mathrm{rich})}    \\
            \text{Non-URM applicants} & n_{a'}^{(\mathrm{poor})} & n_{a'}^{(\mathrm{rich})}
        \end{array}\raisebox{-3ex}{,}
    \end{equation*}
    where the following inequalities hold true:
    \begin{enumerate}[label=(\arabic*),topsep=-0.5ex,itemsep=0ex]
        \item\label{asmp:region_specific_makeup:cond_historical_injustice} geographic disproportion due to historical injustice, i.e., $\nicefrac{n_{a}^{(\mathrm{poor})}}{n_{a'}^{(\mathrm{poor})}} > \nicefrac{n_{a}^{(\mathrm{rich})}}{n_{a'}^{(\mathrm{rich})}}$,
        \item\label{asmp:region_specific_makeup:cond_URM} the definition of ``underrepresented minority,'' i.e., $n_a^{(\mathrm{poor})} + n_a^{(\mathrm{rich})} < n_{a'}^{(\mathrm{poor})} + n_{a'}^{(\mathrm{rich})}$.
    \end{enumerate}
\end{assumption}
\looseness=-1
Condition~\ref{asmp:region_specific_makeup:cond_historical_injustice} specifies that URM applicants are relatively more concentrated in the less well-off region due to historical injustice \citep{sowell2004affirmative,rothstein2017color,alexander2020new}.
Condition~\ref{asmp:region_specific_makeup:cond_URM} holds by definition, i.e., the total number of URM applicants is smaller than that for non-URM applicants.

\begin{assumption}[Determinant of Academic Preparedness]\label{asmp:determinant_academic_preparedness}
    Conditioning on the affluence of the region where the applicant resides in, the academic preparedness is conditionally independent from the sensitive attribute race.
    In other words, we have the following relation ($\indep$ denotes independence):
    \begin{equation*}
        \texttt{Academic Preparedness} ~ \indep ~ \texttt{Race} ~ \mid ~ \texttt{Region}.
    \end{equation*}
\end{assumption}
While there can be marginal dependence between \texttt{Race} and \texttt{Academic Preparedness} due to historical injustice \citep{sowell2004affirmative,rothstein2017color,alexander2020new}, such dependence does \emph{not} indicate that \texttt{Race} is a determinant of applicant's \texttt{Academic Preparedness}.
Assumption~\ref{asmp:determinant_academic_preparedness} specifies that after conditioning on applicant's address region (and therefore, region-specific social determinants related to education), applicant's academic preparedness is irrelevant to the demographic group.
This assumption is not intended to use region as a mediating variable to explain racial discrimination in education, e.g., as in \emph{Conditional Demographic Parity} frameworks \citep{zliobaite2011handling,kamiran2013quantifying,wachter2021fairness}.
Instead, it encodes the premise that race is not an inherent factor shaping learning ability, explicitly refuting essentialist interpretations and aligning with the literature \citep{roberts2011fatal,kohler2018eddie,smedley2018race,kasirzadeh2021use,delgado2023critical,doh2025position}.

\looseness=-1
For tractability, Assumption~\ref{asmp:determinant_academic_preparedness} simplifies the underlying data generating processes by focusing on inter-regional structural injustice and abstracting from intra-regional and within-institution racial discrimination (e.g., bias operating within the same school), which remain important mechanisms beyond the scope of our model.
Even in absence of these mechanisms, we show that the interplay between sensitive-attribute-centered mitigation and progress in structural justice remains non-trivial.

\begin{assumption}[Stochastic Dominance of Academic Preparedness Distribution]\label{asmp:stochastic_dominance}
    Let $S$ denote the non-negative overall academic index score of an applicant's academic preparedness.
    Further let $S_{\text{MAX}}$ and $S_{\text{MIN}}$ denote the highest and lowest possible values of the score.
    The rich region's cumulative distribution function (CDF), of log-converted relative score $Q$ dominates that of the poor region:
    \begin{equation*}
        \begin{split}
             & \text{Let }Q \coloneqq - \log\big(
            \frac{S - S_{\text{MIN}}}{S_{\text{MAX}} - S_{\text{MIN}}}
            \big), \text{ then for } r \in \{\text{poor}, \text{rich}\},                                                \\
             & \text{the CDF's satisfy } F^{(\mathrm{rich})}(q) \geq F^{(\mathrm{poor})}(q), \forall q \in [0, \infty).
        \end{split}
    \end{equation*}
\end{assumption}
The degree of stochastic dominance in academic preparedness between applicants from rich and poor regions captures the level of structural injustice.

\begin{assumption}[Selective Admission and Open Enrollment]\label{asmp:selective_admission_open_enrollment}
    The selective college employs thresholds on applicants' academic preparedness scores and has a limited availability of admissions $g$:
    \begin{equation*}
        g < n, \text{ where } n = n_{a}^{(\mathrm{poor})} + n_{a}^{(\mathrm{rich})} + n_{a'}^{(\mathrm{poor})} + n_{a'}^{(\mathrm{rich})},
    \end{equation*}
    all applicants can get admitted to open-enrollment college.
\end{assumption}

\subsection{Structural Injustice Implications of A Classic Mitigation Strategy}\label{main:theoretical_analysis:quota_based}
The quota-based admission is a type of affirmative-action admission strategy, originally designed to rectify historical injustice by directly setting aside admission quotas to increase the representation of URM students.
Aside from the fact that the quota-based admission procedure is rigid and mechanical (\emph{Grutter v. Bollinger (2003)} \citealp{grutter2003}), this sensitive-attribute-centered mitigation strategy fails to account for the role of social determinants, which vary across regions and influence academic preparedness.

Our results quantitatively recover the well-known concern that quota-based admissions reduce opportunities for non-URM students from disadvantaged regions by shrinking the pool of open seats.
Moreover, we uncover a \textbf{counter-intuitive paradox}: this unintended burden is less likely under more severe structural injustice, and more likely as conditions improve.
\begin{theorem}[Tension Between Structural Justice and Sensitive-Attribute-Centered Mitigation]\label{thm:quota_based}
    Under Assumptions~\ref{asmp:region_specific_makeup}--\ref{asmp:selective_admission_open_enrollment}, let us denote with $\eta_{\mathrm{quota}} \in \big[1, \frac{n}{n_a^{(\mathrm{poor})} + n_a^{(\mathrm{rich})}}\big]$ the weighting coefficient over the natural proportion of URM applicants in population, such that the quota for URM admissions in the selective college is $\eta_{\mathrm{quota}} \cdot (\frac{n_a^{(\mathrm{poor})} + n_a^{(\mathrm{rich})}}{n} g)$.
    Then, the quota-based admission strategy imposes a more competitive requirements (in terms of score threshold) for non-URM applicants from the poor region, than that for URM applicants from the rich region, \textbf{unless} the following condition is satisfied:
    \begin{equation}\label{equ:quota_based}
        \small
        \begin{split}
            \max_{q \in [0, \infty)} \frac{F^{(\mathrm{rich})}(q)}{F^{(\mathrm{poor})}(q)} \geq \frac{
                                                                                                    \eta_{\mathrm{quota}}
                                                                                                }{
                                                                                                    1 + (1 - \eta_{\mathrm{quota}})\frac{n_{a}^{(\mathrm{poor})} + n_{a}^{(\mathrm{rich})}}{n_{a'}^{(\mathrm{poor})} + n_{a'}^{(\mathrm{rich})}}
                                                                                                }~\raisebox{-2.5ex}{.}
        \end{split}
    \end{equation}
\end{theorem}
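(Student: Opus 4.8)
The plan is to translate the quota into group-specific admission rates, convert these rates into region-specific score thresholds using the within-region preparedness law guaranteed by Assumption~\ref{asmp:determinant_academic_preparedness}, and then compare the bar faced by a poor-region non-URM applicant against that faced by a rich-region URM applicant by isolating everything into the single CDF ratio $F^{(\mathrm{rich})}/F^{(\mathrm{poor})}$. Writing $N_a = n_a^{(\mathrm{poor})} + n_a^{(\mathrm{rich})}$ and $N_{a'} = n_{a'}^{(\mathrm{poor})} + n_{a'}^{(\mathrm{rich})}$ with $n = N_a + N_{a'}$, the URM quota $\eta_{\mathrm{quota}}\cdot(\frac{N_a}{n}g)$ corresponds to a URM admission rate $r_a = \eta_{\mathrm{quota}}\, g/n$, while the remaining seats give a non-URM rate $r_{a'} = (n - \eta_{\mathrm{quota}} N_a)g/(n N_{a'})$. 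The feasibility $g < n$ (Assumption~\ref{asmp:selective_admission_open_enrollment}) together with $\eta_{\mathrm{quota}} \in [1, n/N_a]$ keeps both rates in $[0,1]$, and one checks $r_a/r_{a'} = \frac{\eta_{\mathrm{quota}} N_{a'}}{n - \eta_{\mathrm{quota}} N_a} \ge 1$ exactly because $\eta_{\mathrm{quota}} \ge 1$.

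Next I would use Assumption~\ref{asmp:determinant_academic_preparedness} and Assumption~\ref{asmp:gamma_parameterization}: in region $r$ every applicant, regardless of race, draws $Q$ from the common law $\Gamma(k^{(r)},\theta^{(r)})$ with strictly increasing continuous CDF $F^{(r)}$. This is the key structural point — because race drops out given region, the only object that can carry a cross-group/cross-region comparison is the pair $F^{(\mathrm{poor})}, F^{(\mathrm{rich})}$. Realizing rate $r_a$ among rich-region URM and rate $r_{a'}$ among poor-region non-URM therefore defines unique region-specific thresholds $q_a^{(\mathrm{rich})}$ and $q_{a'}^{(\mathrm{poor})}$ via $F^{(\mathrm{rich})}(q_a^{(\mathrm{rich})}) = r_a$ and $F^{(\mathrm{poor})}(q_{a'}^{(\mathrm{poor})}) = r_{a'}$, with existence and uniqueness from monotonicity and continuity of the Gamma CDF. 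Since smaller $Q$ is more competitive, the claim ``more competitive requirement for non-URM poor than URM rich'' is precisely $q_{a'}^{(\mathrm{poor})} < q_a^{(\mathrm{rich})}$.

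The crux is then a one-line reduction followed by a bounding step. Pushing both thresholds through the increasing map $F^{(\mathrm{rich})}$ gives $q_{a'}^{(\mathrm{poor})} < q_a^{(\mathrm{rich})}$ iff $F^{(\mathrm{rich})}(q_{a'}^{(\mathrm{poor})}) < F^{(\mathrm{rich})}(q_a^{(\mathrm{rich})}) = r_a$, and factoring $F^{(\mathrm{rich})}(q_{a'}^{(\mathrm{poor})}) = \frac{F^{(\mathrm{rich})}(q_{a'}^{(\mathrm{poor})})}{F^{(\mathrm{poor})}(q_{a'}^{(\mathrm{poor})})}\, r_{a'}$ converts this to the scalar inequality $\frac{F^{(\mathrm{rich})}(q_{a'}^{(\mathrm{poor})})}{F^{(\mathrm{poor})}(q_{a'}^{(\mathrm{poor})})} < r_a/r_{a'} = \frac{\eta_{\mathrm{quota}} N_{a'}}{n - \eta_{\mathrm{quota}} N_a}$, whose right-hand side is exactly the bound on the right of \eqref{equ:quota_based}. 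Because $q_{a'}^{(\mathrm{poor})}$ has no closed form, I would dominate the pointwise ratio by its global maximum, $\frac{F^{(\mathrm{rich})}(q_{a'}^{(\mathrm{poor})})}{F^{(\mathrm{poor})}(q_{a'}^{(\mathrm{poor})})} \le \max_{q \ge 0} \frac{F^{(\mathrm{rich})}(q)}{F^{(\mathrm{poor})}(q)}$. Hence if the negation of \eqref{equ:quota_based} holds, namely $\max_q \frac{F^{(\mathrm{rich})}(q)}{F^{(\mathrm{poor})}(q)} < \frac{\eta_{\mathrm{quota}} N_{a'}}{n - \eta_{\mathrm{quota}} N_a}$, the scalar inequality is forced and $q_{a'}^{(\mathrm{poor})} < q_a^{(\mathrm{rich})}$; contrapositively, the more-competitive requirement can fail only when \eqref{equ:quota_based} holds, which is precisely the ``unless'' clause. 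The dominance half of Assumption~\ref{asmp:gamma_parameterization} ($F^{(\mathrm{rich})} \ge F^{(\mathrm{poor})}$) makes the ratio $\ge 1$, consistent with the bound exceeding $1$ for $\eta_{\mathrm{quota}} \ge 1$, and Assumption~\ref{asmp:region_specific_makeup} fixes the affirmative-action regime ($\eta_{\mathrm{quota}} \ge 1$, URM concentrated in the poor region) that gives the comparison its interpretation as compounded disadvantage.

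I expect the main obstacle to be the reduction-and-bounding move in the last step: one must first recognize that Assumption~\ref{asmp:determinant_academic_preparedness} collapses all race/region contrast onto $F^{(\mathrm{rich})}/F^{(\mathrm{poor})}$, then get the algebra of the threshold comparison exactly right so that $r_a/r_{a'}$ lands on the stated constant, and finally realize that since the relevant threshold $q_{a'}^{(\mathrm{poor})}$ is only implicitly defined by the Gamma parameters, the clean distribution-robust conclusion must be phrased through $\max_q \frac{F^{(\mathrm{rich})}(q)}{F^{(\mathrm{poor})}(q)}$ rather than its pointwise value — which is exactly what yields the one-directional ``unless \eqref{equ:quota_based}'' form rather than an iff.
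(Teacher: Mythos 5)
Your proposal is correct and follows essentially the same route as the paper's proof: both translate the quota into group-wide admission rates $\eta_{\mathrm{quota}}g/n$ and $\eta_{\mathrm{quota}}'g/n$ realized uniformly within each region, define the implicit thresholds via $F^{(\mathrm{rich})}(q_a^{(\mathrm{rich})})=r_a$ and $F^{(\mathrm{poor})}(q_{a'}^{(\mathrm{poor})})=r_{a'}$, and then show that $\max_{q}F^{(\mathrm{rich})}(q)/F^{(\mathrm{poor})}(q) < r_a/r_{a'}$ forces $q_{a'}^{(\mathrm{poor})} < q_a^{(\mathrm{rich})}$, which is the contrapositive of the stated ``unless'' condition. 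Your factoring of $F^{(\mathrm{rich})}(q_{a'}^{(\mathrm{poor})})$ through the CDF ratio is just a slightly more compact packaging of the paper's chain of equalities and inequalities, and the constant $r_a/r_{a'}$ matches the paper's $\eta_{\mathrm{quota}}/\eta_{\mathrm{quota}}'$ exactly.
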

\looseness=-1
In simple terms, Theorem~\ref{thm:quota_based} establishes that (i) quota-based mitigation can disadvantage non-URM applicants from disadvantaged regions, and (ii) counter-intuitively, such harm is easier to avoid when structural injustice is more severe, and more aggressive sensitive-attribute-centered mitigation only makes generating new injustice more likely.

Specifically, as the quota increases ($\eta_{\mathrm{quota}}$ grows), more seats are reserved for URM applicants (from both poor and rich regions), the more challenging for non-URM applicants in the poor region to be able to attend the selective college.
Non-URM applicants in the poor region, who face the same obstacles and disadvantages in contextual environments as their URM counterparts, are not reserved additional spots;
on top of that, they have to compete with more advantaged peers (non-URM applicants from the rich region) for a shrinking pool of openings.

\looseness=-1
More importantly, and counter-intuitively, this unintended harm is less likely when existing structural injustice is more severe, as captured by a larger degree of stochastic dominance on the left-hand side of \eqnref{equ:quota_based}.
However, as structural justice improves and regional dominance diminishes, quota-based mitigation becomes increasingly likely to generate new forms of structural injustice, since the inequality in \eqnref{equ:quota_based} can get violated.
Moreover, the right-hand side of \eqnref{equ:quota_based} grows with the quota size $\eta_{\mathrm{quota}}$, indicating that stronger sensitive-attribute-centered intervention amplifies this unintended harm more drastically.

Unlike traditional arguments against affirmative action that seek to ignore certain sensitive attributes, we argue that ignoring the interaction between sensitive attributes and social determinants harms the most disadvantaged subsets of all demographic groups.

\begin{tcolorbox}[takeaway_box, colback=takeaway_box_bg]
    \textbf{Takeaway:} There is a paradoxical relationship between sensitive-attribute-centered mitigation and structural injustice: certain mitigation strategies are less likely to introduce new disadvantage when existing structural injustice is more severe, but increasingly generate new forms of structural injustice as conditions improve.
\end{tcolorbox}
\vspace{-1ex}

\begin{figure*}[t]
    \centering
    \begin{subfigure}[t]{0.33\textwidth}
        \centering
        \includegraphics[height=17ex]{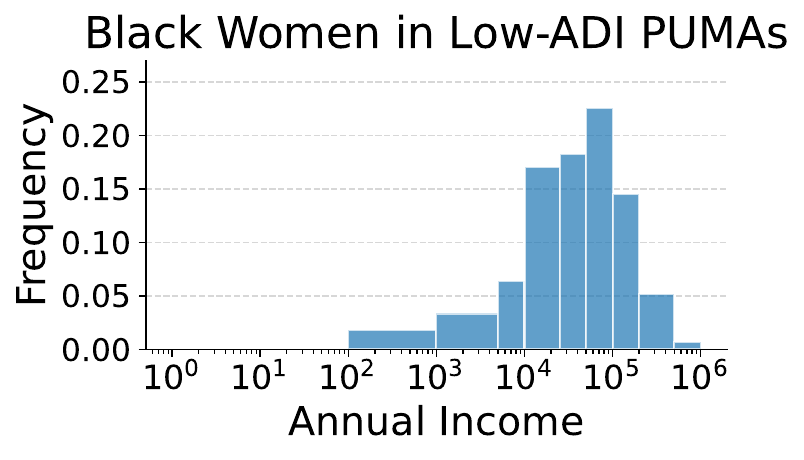}
        \caption{
            PUMAs w/ low-level area deprivation
        }
        \label{suppfig:puma_adi_low}
    \end{subfigure}%
    \begin{subfigure}[t]{0.33\textwidth}
        \centering
        \includegraphics[height=17ex]{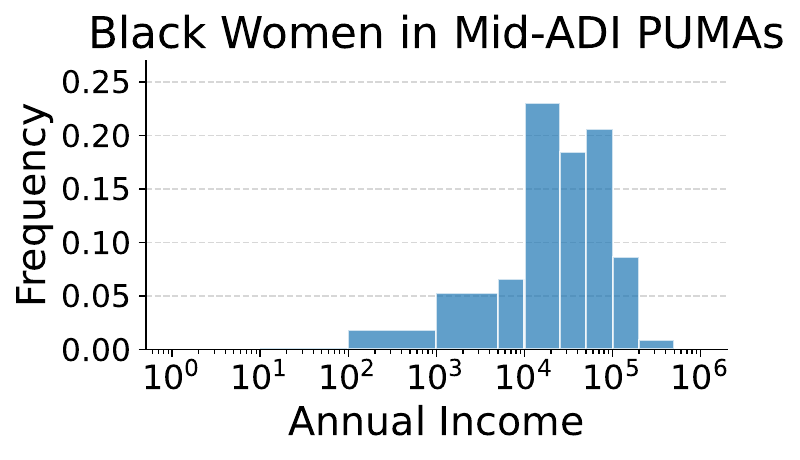}
        \caption{
            PUMAs w/ mid-level area deprivation
        }
        \label{suppfig:puma_adi_mid}
    \end{subfigure}%
    \begin{subfigure}[t]{0.33\textwidth}
        \centering
        \includegraphics[height=17ex]{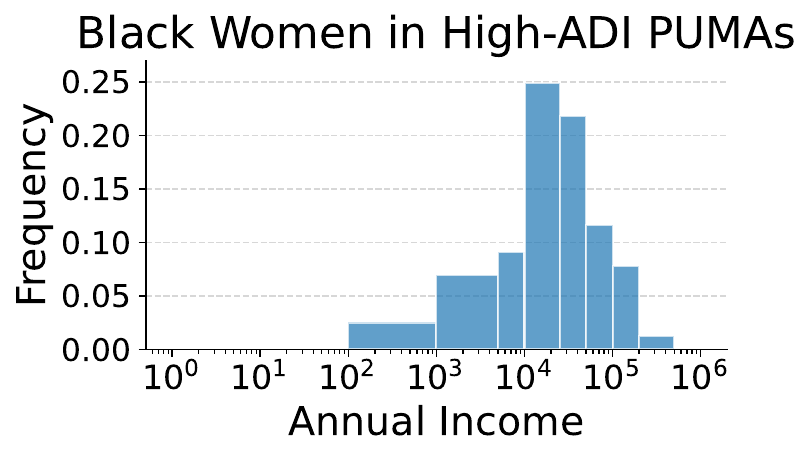}
        \caption{
            PUMAs w/ high-level area deprivation
        }
        \label{suppfig:puma_adi_high}
    \end{subfigure}
    \caption{
        \looseness=-1
        Histogram of annual income for African American women residing in different areas.
        Sensitive attributes (race and sex) are shared across the subfigures, whereas social determinants are not (higher ADI indicates higher area deprivation).
        The median annual income for African American women is $\$38,000$ for low-ADI PUMAs, $\$23,800$ for med-ADI PUMAs, and $\$18,800$ for high-ADI PUMAs.
    }
    \label{suppfig:puma_adi_african_american_income}
\end{figure*}

\begin{figure*}[t]
    \centering
    \begin{subfigure}{.20\textwidth}
        \centering
        \includegraphics[height=17.2ex]{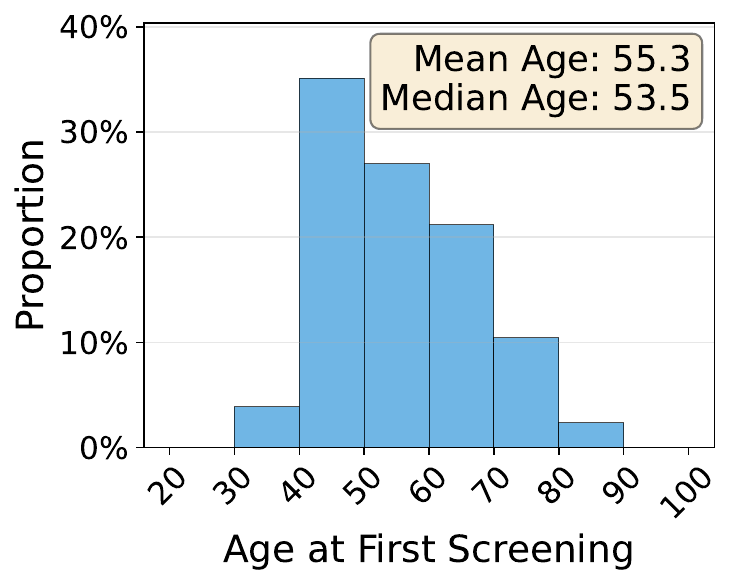}
        \caption{
            Rich region}
        \label{fig:empirical_analysis:bcc_first_screening_adi_0_25}
    \end{subfigure}%
    \begin{subfigure}{.20\textwidth}
        \centering
        \includegraphics[height=17.2ex]{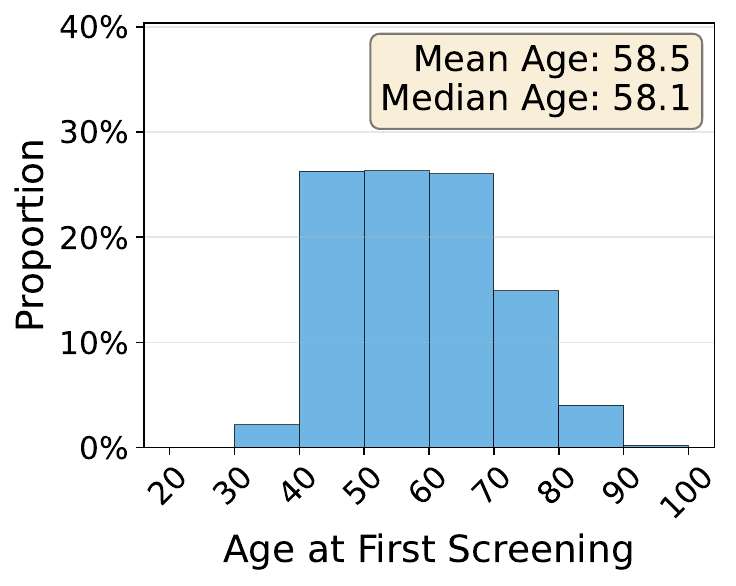}
        \caption{
            Poor region}
        \label{fig:empirical_analysis:bcc_first_screening_adi_75_100}
    \end{subfigure}%
    \begin{subfigure}{.29\textwidth}
        \centering
        \includegraphics[height=17.4ex]{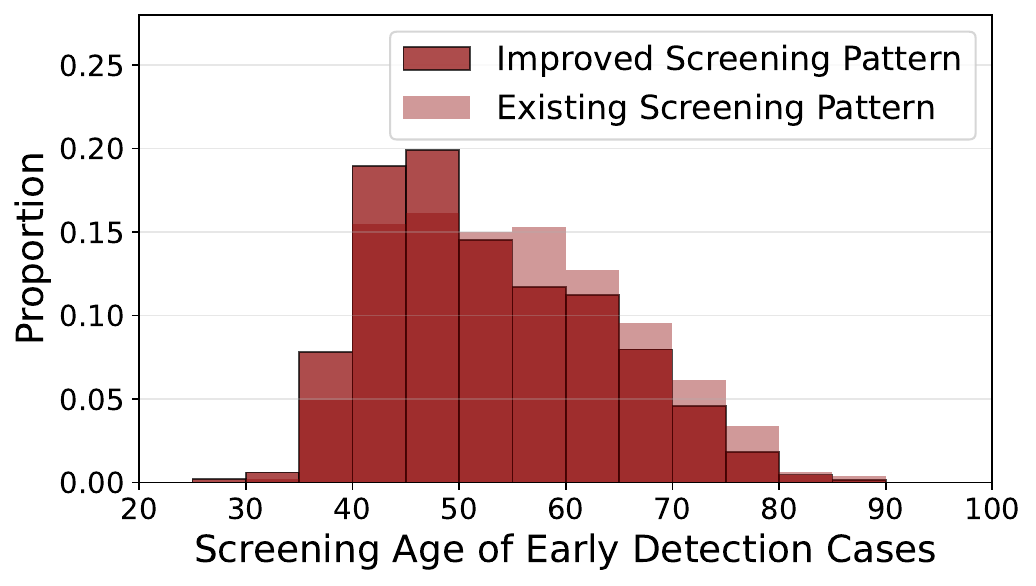}
        \caption{
            Semi-synthetic screening settings}
        \label{fig:empirical_analysis:simulate_screening_hists}
    \end{subfigure}%
    \begin{subfigure}{.29\textwidth}
        \centering
        \includegraphics[height=17.4ex]{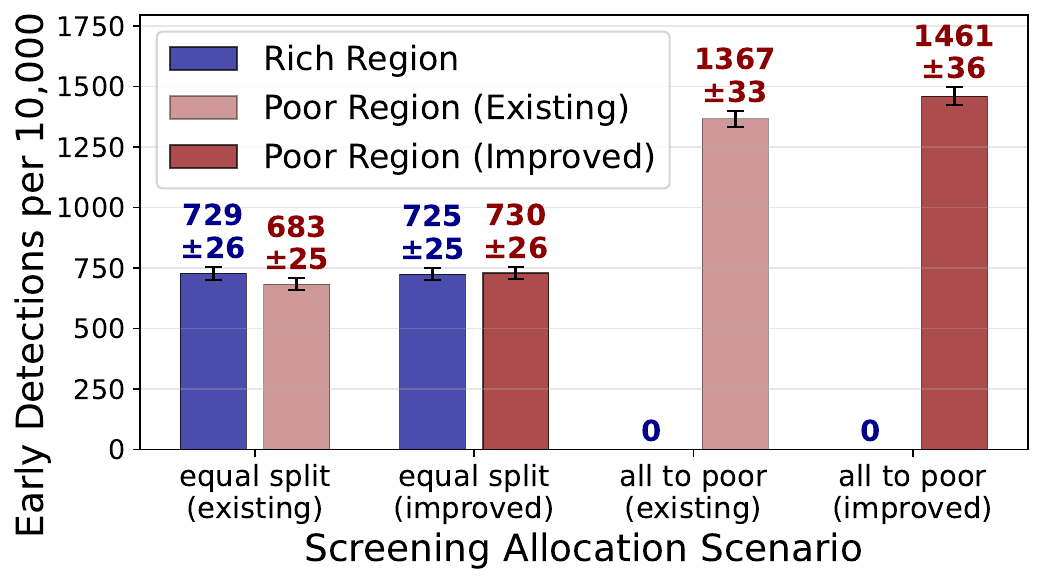}
        \caption{
            Quantitative comparison of scenarios}
        \label{fig:empirical_analysis:simulate_screening_quantities}
    \end{subfigure}%
    \caption{
        Panels (a) and (b): Histogram of the age at the first-ever breast cancer screening for White women residing in different areas.
        Sensitive attributes (race and sex) are shared, whereas social determinants are not: the rich region corresponds to ADI $\in [0, 25)$ and the poor region corresponds to ADI $\in [75, 100)$.
        Panels (c) and (d): We conduct a semi-synthetic simulation of breast cancer onset and screening to evaluate how policy interventions targeting social determinants affect early detection outcomes.
        Panel (c) shows that adopting improved (rich-region) screening pattern in the poor region shifts detections earlier.
        Panel (d) demonstrates quantitatively that such structural improvements yield non-trivial gains in early detections across allocation scenarios (with a same total number of screenings).
    }
    \label{fig:empirical_analysis:bcc_first_screening_white_population}
\end{figure*}

\section{Empirical Analyses}\label{main:empirical_analyses}
\looseness=-1
In this section, we provide empirical analyses of real-world interplay between sensitive attributes and social determinants.
In Section~\ref{main:empirical_analyses:us_census}, we demonstrate the insufficiency of (intersectional) sensitive attributes when capturing disadvantage on the U.S. Census data \citep{us2009acshandbook,us2014acshandbook,us2022acshandbook}.
In Section~\ref{main:empirical_analyses:breast_cancer_care}, we present a semi-synthetic analysis in a high-stakes domain application regarding breast cancer screening within an integrated U.S. healthcare system.

\subsection{Enhancing Census Data Product Though Linking Socio-Economic Status Indices}\label{main:empirical_analyses:us_census}
We retrieve the public use microdata sample (PUMS) data from the U.S. Census Bureau \citep{us2023pumsdatadict}, and provide visualizations of different Public Use Microdata Areas (PUMAs) in California based on the 2023 U.S. Census PUMS data.
Each PUMA contains at least $100,000$ residents and provides reliable, detailed demographic, economic, and housing statistics at a sub-state level while also protecting the confidentiality of respondents.
In order to link PUMAs to indicators of (area-level) social determinants, we also retrieve the updated Area Deprivation Index (ADI).\footnote{\url{https://www.neighborhoodatlas.medicine.wisc.edu/}}

In Figure~\ref{suppfig:puma_adi_african_american_income}, we present the histogram of annual income for African American women residing in PUMAs with different ADI levels.
As we can see, although the demographic information reflects the intersectional characteristics of individuals (race and sex), the social determinants in different regions still play a nontrivial role in shaping the income distribution.
For instance, in PUMAs with higher ADI levels, the income distribution is more skewed towards lower income levels, indicating that individuals in these areas may face more significant economic challenges compared to those in PUMAs with lower ADI levels.
We provide further analyses on PUMAs in Appendix~\ref{supp:crosswalk}.

\subsection{Semi-Synthetic Analysis in a High-Stakes Domain: Breast Cancer Care}\label{main:empirical_analyses:breast_cancer_care}
In addition to the demographic analyses based on the U.S. census data, we also analyze de-identified patient records from OSF HealthCare, an integrated healthcare system spanning multiple hospitals and outpatient facilities across a multi-county region, serving both urban and rural communities.
Focusing on breast cancer screening and care from 2012--2022, we linked cancer registry-derived mammography logistics data (e.g., screening and diagnostic mammography dates and locations) with demographic and clinical variables (e.g., birth date, gender, race/ethnicity, residential ZIP code, residential area socio-economic index values) using anonymized patient identifiers to ensure full de-identification.
The final dataset contains approximately 54,000 screening mammography records from around 45,000 unique patients.

In Figures~\ref{fig:empirical_analysis:bcc_first_screening_adi_0_25} and \ref{fig:empirical_analysis:bcc_first_screening_adi_75_100}, we present the distribution of age at first breast cancer screening for White women across regions within the healthcare system.
As we can see, even among individuals with identical sensitive attributes (White women), the age at first-ever breast cancer screening differs by several years (e.g., over 3 years in the mean and nearly 5 years in the median).
Because the integrated healthcare system applies a uniform screening guideline across regions, these gaps cannot be attributed to region-specific recommendations.\footnote{
    Nevertheless, across the United States, breast cancer screening guidelines broadly agree that screening mammography should begin around age 40 for average-risk individuals, but differ on recommended frequency.
}
\looseness=-1
Instead, they are consistent with structural disadvantages, such as differences in transportation infrastructure, access to care, or other contextual barriers, that shape screening uptake.
The fact that the guideline is not explicitly social-determinant-aware indicates the inadequacy of fairness through unawareness, mirroring prior insights regarding sensitive attributes \citep{dwork2012fairness,lipton2018does}, now situated in the context of social determinants.

To illustrate the impact of policy interventions targeting social determinants, we perform a semi-synthetic simulation.
We use 100,000 particles (the minimum PUMA population size) per region (rich and poor) with ages uniformly distributed from 20-99.
For each particle in both regions, we simulate whether and when breast cancer develops by sampling year-by-year from age 20 onwards using SEER age-specific incidence rates (2018-2022)\footnote{
    \url{https://seer.cancer.gov/statistics-network/explorer/application.html?site=55&data_type=1&graph_type=3&compareBy=race&chk_race_1=1&rate_type=2&sex=3&advopt_precision=1&advopt_show_ci=on}
}, so that the simulated age-specific onset rate is consistent with epidemiological data.
We then allocate 10,000 screening slots across regions according to different policy scenarios.
For each screening, we sample a ``first-ever screening age'' from empirical distributions derived from real data.
An ``early detection'' benefit occurs when a particle is screened at or before her cancer onset age (assuming reliable periodic screenings after the initial screening), representing cases where screening enables earlier breast cancer care.

Figure~\ref{fig:empirical_analysis:simulate_screening_hists} shows screening ages at which early detections occur.
The ``existing screening pattern'' follows the poor region's empirical distribution of age at first-ever screening (Figure~\ref{fig:empirical_analysis:bcc_first_screening_adi_75_100}), and the ``improved screening pattern'' follows that of the rich region (Figure~\ref{fig:empirical_analysis:bcc_first_screening_adi_0_25}).
Both screening patterns allocate all 10,000 screenings to the poor region, differing only in the first-screening age distribution used.
From the overlaid histograms, we can see that improvements in social determinants, instantiated through improved screening patterns in the poor region, lead to earlier detection timing.

\looseness=-1
Figure~\ref{fig:empirical_analysis:simulate_screening_quantities} compares four allocation scenarios for 10,000 new screenings (500 simulation runs): equal split (5,000 each region) versus all to the poor region, crossed with existing versus improved screening patterns.
Shifting to an improved screening pattern yields a nontrivial gain, for example, increasing early detections from $1,367\pm33$ to $1,461\pm36$ per 10,000 screenings allocated to the poor region.\footnote{
    We acknowledge that our semi-synthetic simulation abstracts from the full complexity of underlying epidemiological processes.
    Beyond first-time screening, incidence is also shaped by additional factors, e.g., genetic \citep{collins2011genetics,shiovitz2015genetics,pal2024understanding}, epigenetic \citep{kim2023epigenetic,prabhu2024beyond}, that may shift observed onset.
    Therefore, we do not intend to make, nor should our results be interpreted as, causal claims about structural barriers.
}

\begin{tcolorbox}[takeaway_box, colback=takeaway_box_bg]
    \textbf{Takeaway:} We empirically demonstrate the instantiation of social determinants beyond sensitive attributes and, via semi-synthetic experiments, show the nontrivial benefits of policies that improve them.
\end{tcolorbox}
\section{Alternative Views}\label{main:alternative_view}

\subsection{Viewpoint: Social Determinants Are ``Just Another'' Sensitive Attribute from a Technical Perspective}
This alternative view states that, technically, social determinants can be treated as ``just another'' sensitive attribute, and therefore, there is no need to capture structural injustice specifically through social determinants:
\vspace{0.5ex}\begin{mdframed}[leftline=true, rightline=false, topline=false, bottomline=false, linecolor=gray, linewidth=3pt, innertopmargin=0ex, innerbottommargin=0ex, innerrightmargin=5ex, innerleftmargin=3ex]
    \itshape\small
    \looseness=-1
    Under associative fairness metrics, the influence of social determinants is instantiated as distribution shifts across domains, and can be addressed through domain adaptation mechanisms.
\end{mdframed}
\vspace{-1ex}

We respond to this alternative view by arguing that social determinants are not inherently individual-level attributes that merely shift across domains.
Rather, they reflect structural injustices that shape individuals opportunities and outcomes in ways that are detached from specific identities or demographic group memberships.
Domain adaptation and transfer learning style fairness approaches typically ask \citep{madras2018learning,coston2019fair,schumann2019transfer,creager2021environment,mukherjee2022domain,teo2023fair}:
``How can we learn predictors that eliminate empirical measures of discrimination under distribution shifts over observed individual attributes, so that disparities across individuals or demographic groups is robustly mitigated?''
This framing treats heterogeneity across domains as variations in observable distributions, while keeping discrimination defined relative to individual/group identity.

\looseness=-1
In contrast, inquiries centered on social determinants ask a fundamentally different question:
``How can we identify and quantify latent structural factors that give rise to these domains in the first place, i.e., factors whose (dis)advantageous effects on individuals are not tightly bound to individual identities or demographic categories?''
Essentially, the goal is to capture both \textbf{within-group} heterogeneity (different levels of advantage among individuals of the same demographic group) and \textbf{cross-group} structural burdens (similar disadvantages experienced across different demographic groups).
Social determinants operate at a structural level that is not intended to be eliminated through distributional robustness of a predictor, and thus cannot be treated as merely another sensitive attribute.
It is also important to distinguish our position from redlining \citep{rothstein2017color}: rather than using correlated non-sensitive attributes (e.g., social determinants) as proxies for sensitive attributes to facilitate discrimination, we seek to audit how structural injustice is instantiated so it can be systematically addressed.

\subsection{Viewpoint: Causal Effects of Sensitive Attributes Already Encompass Social Determinants}
This alternative view states that, the influence of social determinants are captured either implicitly or explicitly by the causal effects of sensitive attributes:
\begin{mdframed}[leftline=true, rightline=false, topline=false, bottomline=false, linecolor=gray, linewidth=3pt, innertopmargin=0ex, innerbottommargin=0ex, innerrightmargin=5ex, innerleftmargin=3ex]
    \itshape\small
    \looseness=-1
    For causal fairness metrics, social determinants are modeled either implicitly (encoded in directed edges from sensitive attributes to downstream variables) or explicitly (as mediators along the causal pathways originating from sensitive attributes), and therefore, their influences are encompassed in causal effects of sensitive attributes.
\end{mdframed}
\vspace{-1.5ex}

\looseness=-1
We respond by arguing that causal effects of sensitive attributes do not correspond to interventions on the underlying structural conditions, and therefore, cannot recover the causal influence from social determinants.

\looseness=-1
First, for the ``implicit'' case, representing social determinants through directed causal edges from sensitive attributes to downstream variables creates a mismatch between the technical meaning of the causal model and the fairness objectives it is intended to capture.
Let us consider $\texttt{Race} \rightarrow \texttt{Education Status}$ \citep{kusner2017counterfactual,nabi2018fair,chiappa2019path,nabi2022optimal} as an example.
Even if we treat directed edges only as reductive summaries (e.g., encoding the influence of unobserved factors such as neighborhood affluence, geographic location, or institutional access) instead of asserting direct causal effect, there is a mismatch: while the edge stands in for latent contextual influences, the contrast operates on sensitive attributes themselves rather than on the underlying social determinants it proxies.

Second, for the ``explicit'' case, while previous path-specific fairness approaches \citep{zhang2018equality,chiappa2019path,wu2019pc} can potentially include social-determinant variables as mediators along causal pathways from sensitive attributes to outcomes, this framing treats such variables as static downstream features rather than as shifting structural levers that could be the target of intervention.
Moreover, social determinants need not be downstream variables of sensitive attributes.
For example, race or sex is not a plausible ancestor node of environmental or contextual variables, which affect individuals in ways that are not mediated by, nor specific to, demographic memberships.

\looseness=-1

\begin{tcolorbox}[takeaway_box, colback=takeaway_box_bg]
    \textbf{Takeaway:} Both alternative views collapse social determinants into sensitive attributes, but doing so obscures their fundamentally non-interchangeable conceptual roles and leads to misaligned technical abstractions.
\end{tcolorbox}

\section{Call to Actions and Concluding Remarks}\label{main:conclusion}
\looseness=-1
Addressing structural injustice in ML fairness demands coordinated advances in data governance, dynamic metric design, and explicit characterization of underlying data generating processes.
In this section, we outline these actionable pillars.

\vspace{-1.9ex}
\paragraph{Pillar 1 Data Governance: Privacy-Aware Use of Social Determinant Data}
Institutions should establish tiered data governance frameworks that distinguish between raw identifiers, privacy-preserving aggregates, and analytically derived contextual features.
Concretely, raw identifiers (e.g., exact address/community, census tract) should be stored in restricted data locations with audited access, while model development relies on privacy-preserving contextual variables such as neighborhood deprivation indices, environmental exposure measures, or resource accessibility scores computed via differential privacy \citep{dwork2014algorithmic} mechanisms.
Data curators should document how social determinant variables are constructed, quantify privacy risk, and provide standardized contextual feature sets that can be safely integrated into datasets and benchmarks.


\vspace{-1.75ex}
\paragraph{Pillar 2 Metric Design: Dynamic and Context-Adaptable Fairness Metrics}
Fairness metrics should be redesigned to track disparities across evolving structural conditions rather than static demographic groupings or individual identities alone.
Practically, this involves coupling individual-level attributes with time-varying contextual indicators (e.g., level of environmental pollution, policy exposure, access to care) and computing disparity measures conditional on these structural states.
For example, as a counterpart to \emph{Demographic Parity} \citep{calders2009building,dwork2012fairness,zemel2013learning,feldman2015certifying}, \emph{Social Determinant Parity} could evaluate (predicted) outcome disparities across area deprivation indices, infrastructure access levels, or regions with differing policy exposure.
In addition, longitudinal formulations of \emph{Social Determinant Parity} can track how outcome disparities change as contextual variables shift over time, enabling ongoing diagnostic of how algorithmic systems interact with changing social environments.


\vspace{-1.75ex}
\paragraph{Pillar 3 Underlying Data Generating Processes: Identifying Intervention Levers}
\looseness=-1
Causal fairness frameworks should treat social determinants, especially community- and context-level conditions, as explicit intervention targets.
Concretely, causal fairness approaches should construct multi-layer causal models in which individual outcomes are jointly influenced by personal attributes and contextual variables such as neighborhood resources, environmental exposures, and policy exposures.
This shift also necessitates leveraging causal representation learning \citep{xie2020generalized,scholkopf2021toward,zheng2023generalizing,xie2024generalized,ng2024score,zhang2024causal,dai2025latent} to actively seek out latent factors that could reflect structural characteristics, as well as hybrid inference methods that jointly perform causal inference \citep{pearl2009causal,peters2017elements,hernan2020causal} and relational inference between individuals and their surrounding contexts \citep{friedman1999learning,maier2010learning,ahsan2023learning}.



\begin{tcolorbox}[takeaway_box, colback=takeaway_box_bg]
    \textbf{Conclusion:} Addressing structural injustice requires ML fairness research to provide technical tools that extend beyond addressing sensitive attributes, enabling the identification of effective structural levers and the rigorous measurement of progress toward structural justice, so that policy interventions can be better informed and evaluated.
\end{tcolorbox}

\section*{Acknowledgments}
We would like to acknowledge the support from NSF Award No.~2229881, AI Institute for Societal Decision Making (AI-SDM), the National Institutes of Health (NIH) under Contract R01HL159805, and grants from Quris AI, Florin Court Capital, MBZUAI-WIS Joint Program, and the Al Deira Causal Education project.
SK acknowledges support by NSF 2046795 and 2205329, IES R305C240046, ARPA-H, the MacArthur Foundation, Schmidt Sciences, and HAI.
We appreciate constructive feedback from anonymous reviewers, and we gratefully acknowledge Clark Glymour, Safura Sultana, Jonathan Handler, William Bond, Marlene Robles Granda, David McGrew, Corey Silver, Kelly Johnson, and George Sanders for their valuable project support and assistance.
{\bibliography{references}}
\bibliographystyle{icml2026}

\newpage
\appendix
\onecolumn

\icmltitle{Supplement to ``Position: Beyond Sensitive Attributes, ML Fairness Should Quantify Structural Injustice via Social Determinants''}


\counterwithin{equation}{section}
\renewcommand{\theequation}{\thesection.\arabic{equation}}


\startcontents[supp]
\renewcommand\contentsname{Table of Contents: Appendix}
\printcontents[supp]{l}{1}{\section*{\contentsname}\setcounter{tocdepth}{2}}

\newpage

\section{Further Discussions on Related Works}\label{supp:related_work}
In this section, we present further discussions on related works.
In Section~\ref{supp:related_work:observational_vs_causal}, we consider types of information utilized when characterizing algorithmic fairness, and their relative emphases.
In Section~\ref{supp:related_work:comparison_causal_fairness}, we provide a detailed comparison between our advocacy and previous works on causal fairness.
In Section~\ref{supp:related_work:structural_determinant}, we present additional remarks including the use of term ``structure'' and the conceptual distinction between ``unfairness'' and ``discrimination'' in related disciplines.

\subsection{Fairness Notions Based on Observational Statistics and Causal Analysis}\label{supp:related_work:observational_vs_causal}
Various notions have been proposed in the algorithmic fairness literature to characterize fairness with respect to the prediction or the prediction-based decision-making \citep{dwork2012fairness,hardt2016equality,chouldechova2017fair,zafar2017fairness}, and also notions that are based on causal modeling of the data generating process \citep{kusner2017counterfactual,kilbertus2017avoiding,nabi2018fair,chiappa2019path,wu2019pc,coston2020counterfactual}.
Recent survey papers have presented overviews on fairness notions in static settings \citep{loftus2018causal,makhlouf2020survey,mehrabi2021survey}, dynamic settings \citep{zhang2021fairness}, and also the connection between algorithmic fairness and the literature from moral and political philosophy \citep{tang2023and}.

The type of information utilized reflects different emphases of algorithmic fairness studies.
Notions based on observational statistics analyze the fairness implications in terms of the \emph{outcome} of predictions or decision-making \citep{dwork2012fairness,hardt2016equality,chouldechova2017fair,zafar2017fairness,kearns2018preventing,foulds2020intersectional}.
Approaches that capture causal influences from the protected feature to the target variable at the individual-level \citep{kusner2017counterfactual,kilbertus2017avoiding,nabi2018fair,chiappa2019path,wu2019pc} and the (sub-)group-level \citep{coston2020counterfactual,imai2020principal,mishler2021fairness} put more emphases on the \emph{procedural} aspect of algorithmic fairness inquiries, focusing on the data generating process of interest.
Recent work has also proposed to address procedural fairness over all objectionable data generating components \citep{tang2024procedural} according to John Rawls's advocacy for pure procedural justice \citep{rawls1971theory,rawls2001justice}.

\subsection{Detailed Comparison with Causal Fairness Approaches}\label{supp:related_work:comparison_causal_fairness}
In this subsection, we provide a detailed comparison between our position and previous works on causal fairness, in terms of the question of interest (Section \ref{supp:related_work:comparison_causal_fairness:question_of_interest}), and whether or not our framework are in tension with previous causal fairness approaches (Section \ref{supp:related_work:comparison_causal_fairness:no_conflict}).

\subsubsection{\textbf{Question of Interest}}\label{supp:related_work:comparison_causal_fairness:question_of_interest}\par\noindent\\
To avoid overloading the term ``counterfactual'' in the causal inference literature \citep{spirtes1993causation,pearl2000causality,peters2017elements}, we use ``counter-factual'' (with a hyphen, as an opposite to ``factual'') to denote that something does not happen in the current reality.
Previous causal fairness approaches have utilized interventional \citep{kilbertus2017avoiding,nabi2018fair,nabi2019learning,nabi2022optimal} and/or counterfactual \citep{kusner2017counterfactual,chiappa2019path,wu2019pc} causal effects in the technical formulation, and aim to answer the following question:
\begin{question}[\textbf{Counter-Factual Analysis Starting from Protected Features}]\label{qs:start_from_protected_features}
    Under certain conditions and assumptions, what would happen to the predicted outcome in the factual world and the counter-factual world, had \textbf{the protected feature(s)} taken different values?
\end{question}

Based on estimating or bounding certain causal effects among variables, including the protected feature, the (predicted) outcome, and certain variables that are closely related to but not the protected feature itself, e.g., proxy variables \citep{kilbertus2017avoiding}, redlining attributes \citep{zhang2017causal}, admissible variables \citep{salimi2019interventional}, and so on, the fairness violation is quantified in terms of causal effects between the protected feature and the (predicted) outcome.
There is a reductive focus solely upon the protected feature when modeling the discrimination.
For instance, it is a common practice for causal fairness notions to consider varying the value of protected feature \citep{kilbertus2017avoiding,kusner2017counterfactual,nabi2018fair,nabi2019learning,nabi2022optimal,chiappa2019path,wu2019pc} as the starting point.
Recently, \citet{tang2024procedural} have also proposed to consider not only edges or paths originating from the protected feature, but also all objectionable components in the data generating process, to address procedural fairness.

However, the modeling choice of ``summarizing'' discrimination only through edges/paths originating from protected feature, or solely among individual-level variables, falls short of the need to capture structural injustice.
The characteristics of the environment and the context that individuals operate in typically do not correspond to individual-level attributes, and are not considered in previous literature.
Different from causal fairness approaches, our position calls for explicit incorporations of the influence of contextual environments, and aims to address questions like:
\begin{question}[\textbf{Factual Analysis Incorporating Social Determinants}]\label{qs:start_from_social_determinants}
    Under certain conditions and assumptions, what are the aspects of the data generating process that characterize \textbf{the influence from contextual environments to the individual}?
\end{question}

While social determinants often correlate with sensitive attributes, they cannot be captured by features of any particular individual.
Explicit consideration and modeling of social determinants facilitate a more comprehensive understanding of the benefits and burdens experienced by individuals from diverse demographic backgrounds as well as contextual environments, which is essential for understanding and achieving fairness effectively and transparently.

\subsubsection{\textbf{No Conflict in Principle with Causal Fairness}}\label{supp:related_work:comparison_causal_fairness:no_conflict}\par\noindent\\
In principle, our proposal is not in conflict with previous causal fairness approaches, and the two complement each other.
Both our proposal and previous causal fairness approaches aim to model the data generating process, and both emphasize the procedural implications.

However, our proposal extends the scope of consideration beyond sensitive variables, and explicitly incorporates the influence of contextual environments.
For instance, when operationalizing our proposal, we do not drop relevant variables, e.g., the \texttt{Address} of an individual, which is often omitted in previous literature \citep{kilbertus2017avoiding,kusner2017counterfactual,nabi2018fair,chiappa2019path,wu2019pc,mary2019fairness,ding2021retiring}.
Furthermore, the findings of our analyses suggest that we should utilize all information available, and furthermore, actively look for and develop better measurements for social determinants, so that we can better understand and address structural injustice.
Future works naturally include the development of causal effect estimands that incorporate both sensitive attributes and social determinants, and our proposal and previous causal fairness approaches can be used in conjunction to achieve the goal.

\subsection{Additional Remarks on Related Terms}\label{supp:related_work:structural_determinant}

\subsubsection{\textbf{The Uses of ``Structure'' in Related Disciplines}}\par\noindent\\
The term ``structure'' and ``structural'' are utilized in different ways by related disciplines.
For the literature of causal learning and reasoning, the term ``structure'' and ``structural'' are often used to describe how causal structures look like among variables of interest \citep{spirtes1993causation,pearl2000causality,peters2017elements,hernan2020causal}, e.g., in terms of causal graphs and/or structural equation models (SEMs).
For the literature of structural justice and social determinants, the term ``structural'' is used to denote the systemic ways in which society is organized, e.g., through policies, laws, and social norms, that perpetuate discrimination and animus towards certain groups \citep{carmichael1967black,sowell1972black,tilly1998durable,yearby2018racial,robinson2020teaching,alexander2020new,yearby2022structural}.
There are interests in the social determinants of health literature to use DAGs as a tool for illustrative purposes, abstracting key concepts or areas that are interrelated at a high level, and modeling the mechanism through which structural forms of discriminations get realized (racism, sexism, etc.) \citep{robinson2020teaching,yearby2022structural}.


\subsubsection{\textbf{Conceptual Distinctions Between ``Unfairness'' and ``Discrimination''}}\par\noindent\\
The term ``discrimination'' refers to actions, practices, or policies that are based on the (perceived) social group membership of those affected.
Standard accounts mandate that these groups are socially salient, i.e., they must significantly shape interactions within important social contexts \citep{lippert2006badness,porat2005legislating,al2010social} while recent works have challenged the social salience requirement \citep{eidelson2015discrimination}.
The term unfairness is typically understood as the broader concept, which encompassing any violation of principles of justice or proper treatment \citep{alexander1992makes,moreau2020equality}.
In algorithmic fairness literature, existing fairness inquiries (including observational and causal ones) tend to gravitate towards quantifying discrimination.
Meanwhile, \emph{achieving} fairness as structural justice (through addressing \emph{social determinants}) receives less attention compared to \emph{enforcing} fairness as non-discrimination (through addressing \emph{sensitive attributes}).

\section{Further Discussions on Social Determinants}\label{supp:sd_addon}
In this section, we provide a variable audit protocol to operationalize Definition~\ref{def:social_determinant}, and discuss the common presence of social determinants in various practical scenarios, where influence of contextual environments on individuals is often substantial.

\subsection{Operationalizing Definition of Social Determinants}
\paragraph{Variable Audit Protocol}
To operationalize Definition~\ref{def:social_determinant}, we provide the following minimum variable audit protocol:
\begin{enumerate}[nosep]
    \item (Validate context-level definition) Confirm that each $S$ variable is defined at a context level (e.g., neighborhood, institution, jurisdiction), not as an individual attribute.
    \item (Validate structural grounding of $S$) Provide a concrete mechanism linking variation in $S$ to social-structural processes (e.g., policy rules, institutional practices, resource allocation).
    \item (Check exogenous stratification) Ensure that any aggregation used to compute $S$ relies on externally defined groupings (e.g., geographic or institutional boundaries), not clusters derived from individual features.
    \item (Set and audit granularity) Re-evaluate results under coarser and finer versions of the same context definition (e.g., following the spirit of geographic aggregation sensitivity and de-identification analyses in census disclosure review).
    \item (Slice reporting by $S$) Report key metrics (e.g., performance, empirical violations of standard sensitive-attribute-based fairness) stratified by $S$-defined contexts.
    \item (Proxy correlation audit and mitigation) Quantify correlations between $S$ and sensitive attributes at the context level. If correlations exceed a threshold, either (i) coarsen $S$, (ii) remove or replace the variable, or (iii) explicitly justify its use and qualify downstream interpretations.
    \item (Robustness to specification) Recompute $S$ under alternative plausible context definitions (e.g., different boundaries or institutional groupings). Verify that substantive conclusions remain stable across these specifications.
    \item (Feedback loop and deployment monitoring) Track whether decisions informed by the model induce systematic changes in the distribution of $S$ across contexts.
\end{enumerate}

\paragraph{Caution Against Potential Misuses}
We would like to caution against potential misuses of social determinants:
\begin{enumerate}[nosep]
    \item (Re-identification via context linkage) Cross-referencing granular context units with external geospatial or administrative records to identify individuals, particularly when context membership is sparse.
    \item (Proxy laundering at individual-/context- level) Deliberately adopting a social-determinant related variable as a legally permissible proxy for protected attributes (individual-level) or neighborhood (context-level), exploiting the context-level framing to enact discrimination while claiming neutrality.
    \item (Strategic manipulation of context) Leveraging influence over context boundaries or policy (e.g., redistricting, institutional reclassification) to alter the value of social determinants and shift model outputs.
\end{enumerate}

\paragraph{Safeguard Against Proxy Laundering}
In practice, we can distinguish legitimate social determinants $S$ from proxy laundering through complementary safeguards:
\begin{enumerate}[nosep]
    \item Tests / diagnostics
          \begin{itemize}[nosep]
              \item (Proxy correlation audit) Measure correlation between $S$ and sensitive attributes at the context level. Treat high correlation as a risk signal.
              \item (Boundary robustness) Recompute $S$ under alternative context definitions. Instability suggests incidental correlations rather than structural grounding.
          \end{itemize}
    \item Governance constraints
          \begin{itemize}[nosep]
              \item (Exogeneity requirement) Restrict $S$ to variables defined over externally specified contexts, and exclude groupings derived from individual features.
              \item (Usage constraint) Limit $S$ to context-level auditing and prohibit individual-level decision-making use (unless a clear, documented justification is provided).
          \end{itemize}
    \item Documentation requirements
          \begin{itemize}[nosep]
              \item For each $S$, document: (i) construction and data sources, (ii) structural mechanism linking $S$ to social-structural processes, (iii) observed correlations with sensitive attributes, and (iv) any mitigation steps taken.
          \end{itemize}
\end{enumerate}

\subsection{Common Presence of Social Determinants}
To strike a balance between a broad discussion and a case study, we considered a concrete empirical setting of college admissions in the main paper, and demonstrate the nuanced analyses our quantitative proposal facilitates.
However, the implications of explicitly and carefully considering social determinants are not limited to the college admissions setting.

\paragraph{\textbf{Social Determinants -- Health}}
In terms of the influence of environments on individual's health, previous literature has considered how environmental hazards disproportionately affect low-income populations and communities of color \citep{warren2002environmental}, how indoor air pollution affects women and children in low-income regions \citep{manisalidis2020environmental}, and the structural implications of social determinants on how society should be organized \citep{robinson2020teaching,yearby2022structural}.
The time consistency and (sub-)population performance disparity of a mortality predictor has been evaluated in the context of advance care planning \citep{handler2023can}.
More broadly, a review on economic research has also been conducted to show how environmental changes impact public health in both developed and developing countries \citep{remoundou2009environmental}.

\paragraph{\textbf{Social Determinants -- Education}}
In terms of the influence of environments on individual's educational attainments, previous literature has considered how the quality of schools and the availability of educational resources affect students' academic performance \citep{coleman1968equality,coleman1988social}, how the family and neighborhood environments influence education \citep{jencks1972inequality}, and implications of various affirmative-action policies (usually under different names) across countries with different histories and cultures \citep{sowell2004affirmative}.

\paragraph{\textbf{Social Determinants -- Employment}}
In terms of the influence of environments on individual's employment opportunities, previous literature has considered the relation between the employment of residents and the rationalization and optimization level of region's industrial structures \citep{cao2017effect,qin2022environmental}, the psychological perspective of (e.g., influence from collective values of community) job search behaviors \citep{vanHooft2021job}, and how the employment rate of residents is influenced by job quality \citep{howell2019declining}.
\section{Background Information of Admission Strategies \& Proofs of Theoretical Results}\label{supp:proofs}
In this section, we provide background information of admission strategies, present additional theoretical results and the proofs.

\subsection{Background Information of Admission Strategies}
\paragraph{Quota-Based Admissions}
The quota-based admission is a type of affirmative-action admission strategy that sets specific limits on the number of admissions for applicants from different demographic backgrounds.
This admission strategy was originally designed to rectify historical injustice by directly setting aside admission quotas to increase the representation of URM students.
However, due to the rigid nature of the quota-based mechanism, this admission strategy has been controversial and addressed by the U.S. Supreme Court in the landmark case \emph{University of California Regents v. Bakke (1978)} \citep{bakke1978}.
It was held that the use of strict racial quotas in college admission was unconstitutional, and was reaffirmed in another landmark case \emph{Grutter v. Bollinger (2003)} \citep{grutter2003}.

\paragraph{Holistic Review with Plus Factors}
\looseness=-1
Holistic review with plus factors is another type of affirmative-action admission strategy, involving consideration of multiple factors that together define each individual applicant.
The key element of this process is the use of plus factors, where certain characteristics, for instance, race and ethnic group, are given additional weight to promote diversity in the student body and rectify historical disadvantages.
This approach was upheld by the U.S. Supreme Court in \emph{Grutter v. Bollinger (2003)} \citep{grutter2003}, but was overruled in recent decisions for \emph{Students for Fair Admissions (SFFA) v. Harvard \& UNC (2023)} \citep{harvard2023,unc2023}, effectively banning race-conscious admissions.

For holistic review with plus factors, we model its affirmative-action emphasis on the URM group through a distribution shift, i.e., from the original scale to the plus-factor scale, instead of an automatic awarding of points for each URM applicant.
Our modeling choice is for the purpose of avoiding the introduction of rigid and mechanical characteristics to the process, as was addressed in \emph{Gratz v. Bollinger (2003)} \citep{gratz2003}.

\paragraph{Top-Percentage Plans}
The top-percentage plans are college admission policies that guarantee admission to students who graduate in a certain top percentage of their high school classes.
The top-percentage plans are generally not considered traditional affirmative-action admission strategies.
Instead, these policies are race-neutral alternatives aiming to promote diversity by drawing students from a wide range of schools with different socio-economic and geographic backgrounds, without explicitly considering race.
A prominent example is the University of Texas's Top 10\% Rule, which guarantees admission to students in the top 10\% of their class.
Another is the Eligibility in the Local Context (ELC) program of University of California, which was introduced after the 1996 California Proposition 209 banned the use of race, ethnicity, and gender in public university admissions in California.

\subsection{Proof of Theorem~\ref{thm:quota_based} in Section~\ref{main:theoretical_analysis:quota_based}}
\begin{thm}
    Under Assumptions~\ref{asmp:region_specific_makeup}--\ref{asmp:selective_admission_open_enrollment}, let us denote with $\eta_{\mathrm{quota}} \in \big[1, \frac{n}{n_a^{(\mathrm{poor})} + n_a^{(\mathrm{rich})}}\big]$ the weighting coefficient over the natural proportion of URM applicants in population, such that the quota for URM admissions in the selective college is $\eta_{\mathrm{quota}} \cdot (\frac{n_a^{(\mathrm{poor})} + n_a^{(\mathrm{rich})}}{n} g)$.
    Then, the quota-based admission strategy imposes a more competitive requirements (in terms of score threshold) for non-URM applicants from the poor region, than that for URM applicants from the rich region, unless the following condition on region-specific academic preparedness CDF's is satisfied:
    \begin{equation}\label{supp:equ:quota_based}
        \max_{q \in [0, \infty)} \frac{F^{(\mathrm{rich})}(q)}{F^{(\mathrm{poor})}(q)} \geq \frac{(n_{a'}^{(\mathrm{poor})} + n_{a'}^{(\mathrm{rich})}) \eta_{\mathrm{quota}}}{(n_{a}^{(\mathrm{poor})} + n_{a}^{(\mathrm{rich})})(1 - \eta_{\mathrm{quota}}) + (n_{a'}^{(\mathrm{poor})} + n_{a'}^{(\mathrm{rich})})}~\raisebox{-2ex}{.}
    \end{equation}
\end{thm}

\begin{proof}
    Quota-based admission reserves certain number of selective admission spots for the URM group, weighted by a coefficient $\eta_{\mathrm{quota}} > 1$ over natural proportion of URM applicants, i.e., $\eta_{\mathrm{quota}} \cdot (\frac{n_a^{(\mathrm{poor})} + n_a^{(\mathrm{rich})}}{n} g)$.
    Then, the available selective admission spots for the non-URM group is $g - \eta_{\mathrm{quota}} \cdot (\frac{n_a^{(\mathrm{poor})} + n_a^{(\mathrm{rich})}}{n} g)$.

    For the convenience of notation, let us denote $\eta_{\mathrm{quota}}'$ the weight coefficients for the non-URM group over the natural proportion of non-URM applicants in the population, such that:
    \begin{equation}\label{supp:equ:quota_based_eta}
        \eta_{\mathrm{quota}}' \cdot (\frac{n_{a'}^{(\mathrm{poor})} + n_{a'}^{(\mathrm{rich})}}{n} g) = g - \eta_{\mathrm{quota}} \cdot (\frac{n_a^{(\mathrm{poor})} + n_a^{(\mathrm{rich})}}{n} g),
    \end{equation}
    Notice that $\eta_{\mathrm{quota}}' \in [0, 1]$ since $\eta_{\mathrm{quota}} \in \big[1, \frac{n}{n_a^{(\mathrm{poor})} + n_a^{(\mathrm{rich})}}\big]$.
    Additionally, $\eta_{\mathrm{quota}}'$ is not an additional parameter whose value can vary freely, and it is fully determined by the numeric relation specified in \eqnref{supp:equ:quota_based_eta}.

    Because of the limited availability of selective admissions $g$, when employing the quota-based admission strategy, the score thresholds for each group will change as a result of the introduced quota requirements specified by weighting factors $\eta_{\mathrm{quota}}$ and $\eta_{\mathrm{quota}}'$.
    In particular, under Assumptions~\ref{asmp:region_specific_makeup}--\ref{asmp:selective_admission_open_enrollment}, the number of selective admissions for each group is calculated by the weighted sum (according to the probability of getting admitted to the selective college) of applicants from the group across regions, and the selective admission counts need to satisfy the quota requirements:
    \begin{equation}\label{supp:equ:sum_of_quota}
        \begin{split}
            n_{a}^{(\mathrm{poor})} \cdot F^{(\mathrm{poor})} \big(q_{a}^{\text(poor)}\big)
            + n_{a}^{(\mathrm{rich})} \cdot F^{(\mathrm{rich})} \big(q_{a}^{\text(rich)}\big)   & = \eta_{\mathrm{quota}} \cdot (\frac{n_a^{(\mathrm{poor})} + n_a^{(\mathrm{rich})}}{n} g),        \\
            n_{a'}^{(\mathrm{poor})} \cdot F^{(\mathrm{poor})} \big(q_{a'}^{\text(poor)}\big)
            + n_{a'}^{(\mathrm{rich})} \cdot F^{(\mathrm{rich})} \big(q_{a'}^{\text(rich)}\big) & = \eta_{\mathrm{quota}}' \cdot (\frac{n_{a'}^{(\mathrm{poor})} + n_{a'}^{(\mathrm{rich})}}{n} g).
        \end{split}
    \end{equation}

    Since the quota-based admission strategy ensures \eqnref{supp:equ:sum_of_quota} is satisfied given the region-specific demographic makeup (Assumption~\ref{asmp:region_specific_makeup}), we have:
    \begin{equation}\label{supp:equ:quota_URM}
        F^{(\mathrm{poor})} \big(q_{a}^{\text(poor)}\big)  = \frac{g \cdot \eta_{\mathrm{quota}}}{n} = F^{(\mathrm{rich})} \big(q_{a}^{\text(rich)}\big),
    \end{equation}
    \begin{equation}\label{supp:equ:quota_nonURM}
        F^{(\mathrm{poor})} \big(q_{a'}^{\text(poor)}\big) = \frac{g \cdot \eta_{\mathrm{quota}}'}{n} = F^{(\mathrm{rich})} \big(q_{a'}^{\text(rich)}\big).
    \end{equation}

    Let us consider the left-hand-side (LHS) and right-hand-side (RHS) of each equation.
    \begin{itemize}[leftmargin=*]
        \item LHS equals to RHS of \eqnref{supp:equ:quota_URM}: since $F^{(\mathrm{rich})}$ dominates $F^{(\mathrm{poor})}$ (Assumption~\ref{asmp:stochastic_dominance}), we have $q_{a}^{\text(poor)} > q_{a}^{\text(rich)}$, i.e., among URM applicants, the threshold for the raw score in the poor region is lower than that for the rich region.
        \item LHS equals to RHS of \eqnref{supp:equ:quota_nonURM}: for the same reason as above, we have $q_{a'}^{\text(poor)} > q_{a'}^{\text(rich)}$, i.e., among non-URM applicants, the threshold for the raw score in the poor region is lower than that for the rich region.
        \item LHS of \eqnref{supp:equ:quota_URM} and LHS of \eqnref{supp:equ:quota_nonURM}: since $\eta_{\mathrm{quota}}' < 1 < \eta_{\mathrm{quota}}$, we have $q_{a}^{\text(poor)} > q_{a'}^{\text(poor)}$, i.e., for the poor region, the threshold for the raw score of URM applicants is lower than that for non-URM applicants.
        \item RHS of \eqnref{supp:equ:quota_URM} and RHS of \eqnref{supp:equ:quota_nonURM}: for the same reason as above, we have $q_{a}^{\text(rich)} > q_{a'}^{\text(rich)}$, i.e., for the rich region, the threshold for the raw score of URM applicants is lower than that for non-URM applicants.
    \end{itemize}

    However, the relative magnitude relation between $q_{a'}^{\text(poor)}$ (for non-URM applicants residing in the poor region) and $q_{a}^{\text(rich)}$ (for URM applicants residing in the rich region) can go either way.
    Specifically, we can show that if $\max_{q \in [0, \infty)} \frac{F^{(\mathrm{rich})}(q)}{F^{(\mathrm{poor})}(q)} < \frac{\eta_{\mathrm{quota}}}{\eta_{\mathrm{quota}}'}$, then $q_{a'}^{\text(poor)} < q_{a}^{\text(rich)}$, i.e., the threshold at the raw score for non-URM applicants in the poor region is higher than that for URM applicants from the rich region:
    \begin{equation}
        \text{when } \max_{q \in [0, \infty)} \frac{F^{(\mathrm{rich})}(q)}{F^{(\mathrm{poor})}(q)} < \frac{\eta_{\mathrm{quota}}}{\eta_{\mathrm{quota}}'},
        \text{ we have } \frac{\eta_{\mathrm{quota}}}{\eta_{\mathrm{quota}}'} \cdot F^{(\mathrm{poor})}(q_{a'}^{\text(poor)}) > F^{(\mathrm{rich})}(q_{a'}^{\text(poor)}),
    \end{equation}
    and at the same time
    \begin{equation}
        \frac{\eta_{\mathrm{quota}}}{\eta_{\mathrm{quota}}'} \cdot F^{(\mathrm{poor})}(q_{a'}^{\text(poor)})
        \overset{(i)}{=} F^{(\mathrm{poor})}(q_{a}^{\text(poor)})
        \overset{(ii)}{=} F^{(\mathrm{rich})}(q_{a}^{\text(rich)}),
    \end{equation}
    where (i) results from \eqnref{supp:equ:quota_URM} and \eqnref{supp:equ:quota_nonURM}, and (ii) follows \eqnref{supp:equ:quota_URM}.

    Because $F^{(\mathrm{rich})}(q_{a}^{\text(rich)}) > F^{(\mathrm{rich})}(q_{a'}^{\text(poor)})$ and the CDF function $F^{(\mathrm{rich})}(\cdot)$ is non-decreasing, we have $q_{a'}^{\text(poor)} < q_{a}^{\text(rich)}$.
    In other words, as a necessary condition to prevent this, we need
    \begin{equation}
        \max_{q \in [0, \infty)} \frac{F^{(\mathrm{rich})}(q)}{F^{(\mathrm{poor})}(q)} \geq \frac{\eta_{\mathrm{quota}}}{\eta_{\mathrm{quota}}'},
    \end{equation}
    after re-arranging, and incorporating \eqnref{supp:equ:quota_based_eta}, gives us
    \begin{equation*}
        \max_{q \in [0, \infty)} \frac{F^{(\mathrm{rich})}(q)}{F^{(\mathrm{poor})}(q)} \geq \frac{(n_{a'}^{(\mathrm{poor})} + n_{a'}^{(\mathrm{rich})}) \eta_{\mathrm{quota}}}{(n_{a}^{(\mathrm{poor})} + n_{a}^{(\mathrm{rich})})(1 - \eta_{\mathrm{quota}}) + (n_{a'}^{(\mathrm{poor})} + n_{a'}^{(\mathrm{rich})})}~\raisebox{-2ex}{.}
    \end{equation*}
\end{proof}

\begin{figure*}[t]
    \centering
    \begin{subfigure}{.24\textwidth}
        \centering
        \includegraphics[height=41ex]{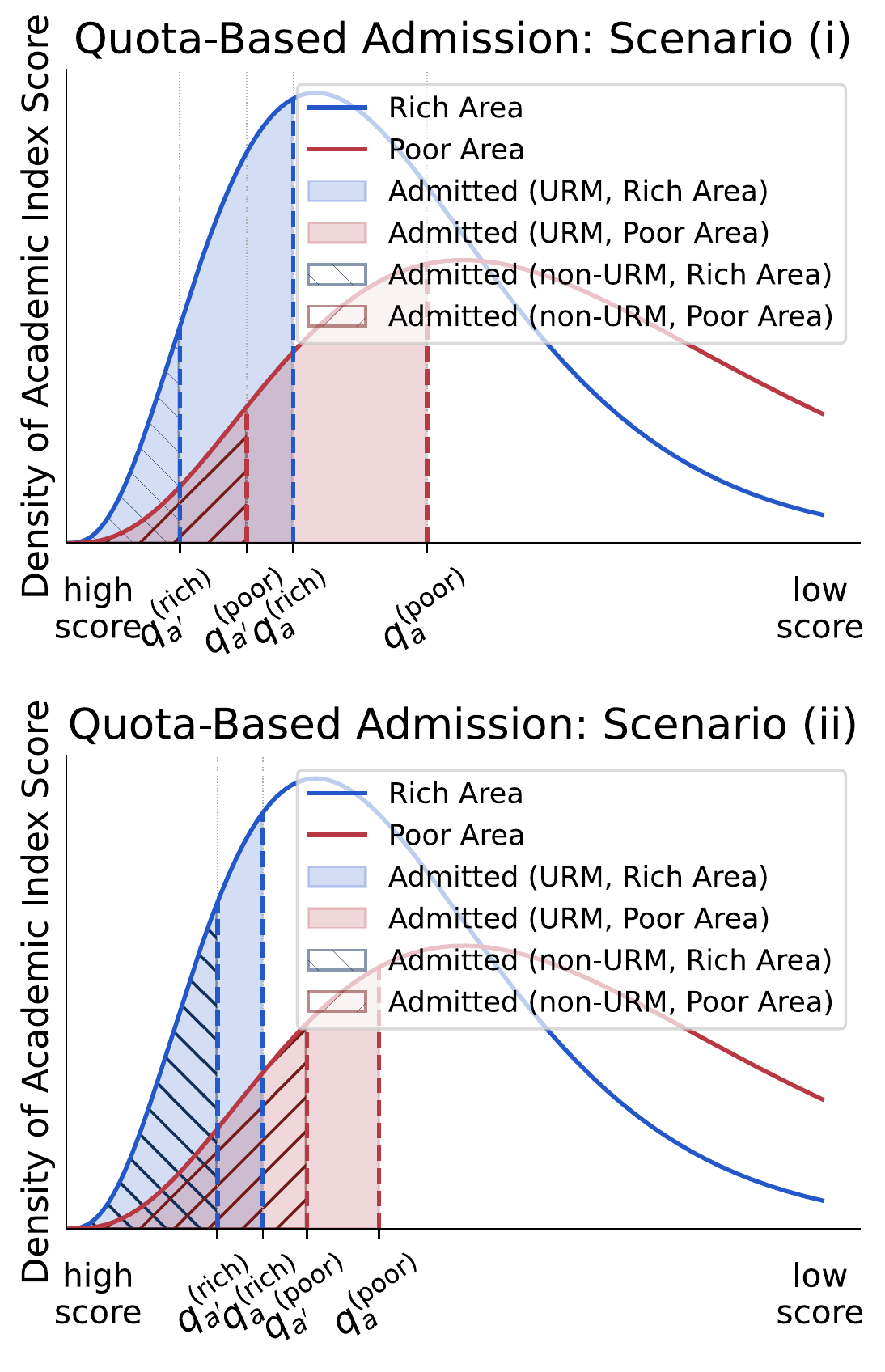}
        \caption{
            Quota-based admission}
        \label{fig:theoretical_analysis:quota_based}
    \end{subfigure}
    \textcolor{vruleGray}{\vrule width 0.2pt}
    \begin{subfigure}{.48\textwidth}
        \centering
        \includegraphics[height=41ex]{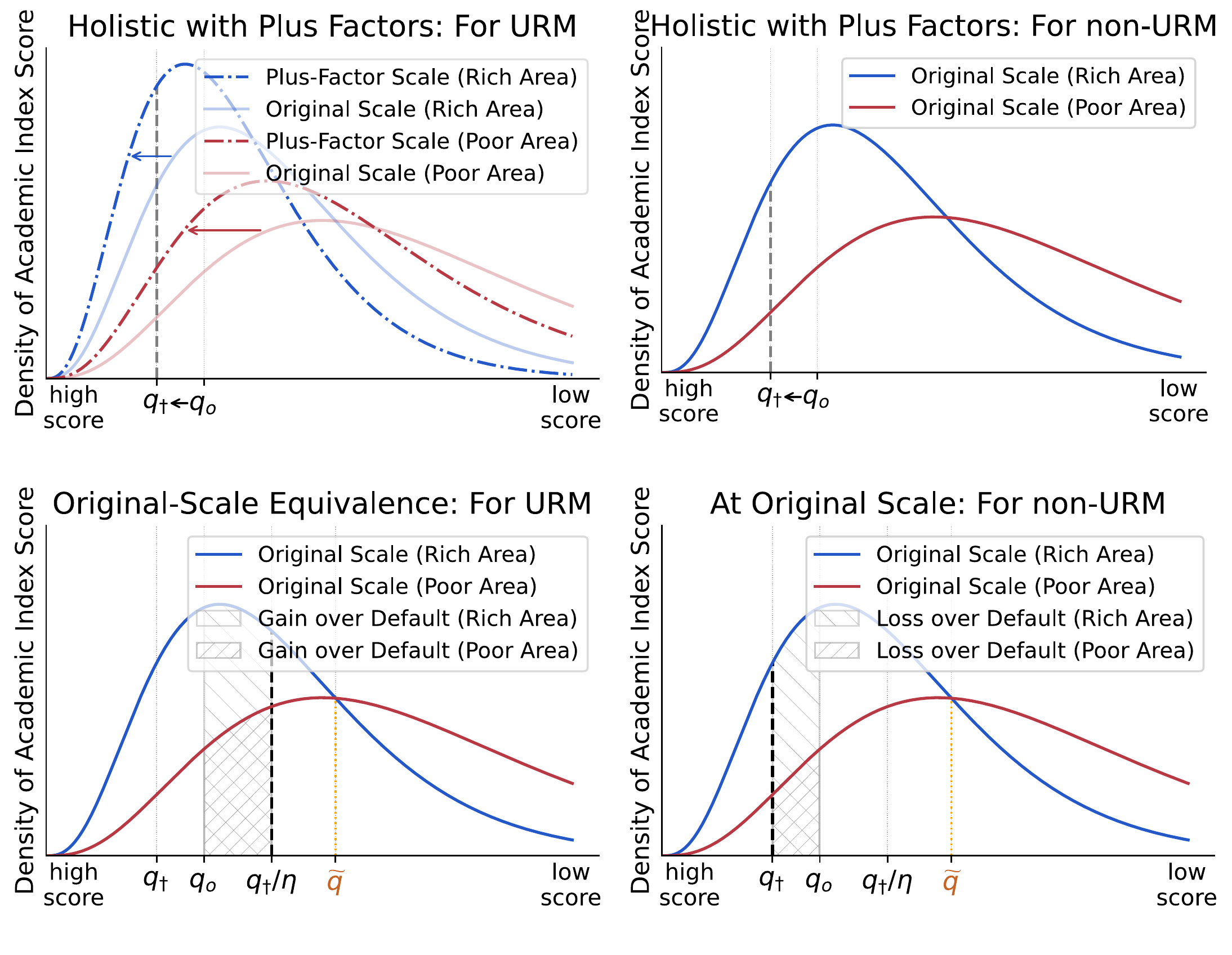}
        \caption{
            Holistic review with plus factors}
        \label{fig:theoretical_analysis:plus_factor}
    \end{subfigure}
    \textcolor{vruleGray}{\vrule width 0.2pt}
    \begin{subfigure}{.24\textwidth}
        \centering
        \includegraphics[height=41ex]{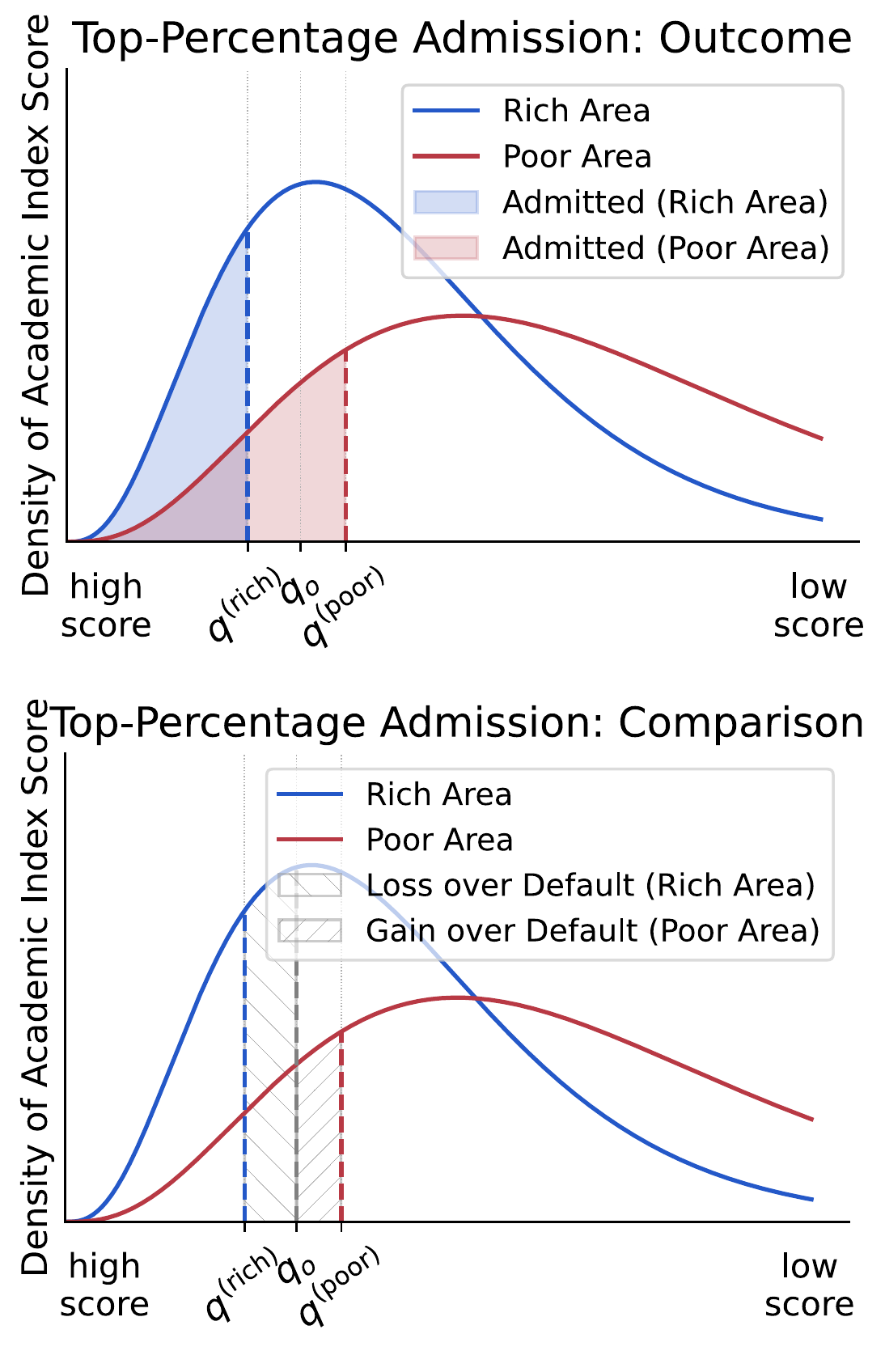}
        \caption{
            Top-percentage plan}
        \label{fig:theoretical_analysis:top_percentage}
    \end{subfigure}
    \caption{
        \looseness=-1
        Fairness implications of different admission strategies.
        Panel (a): quota-based admission can introduce additional unfairness against non-URM applicants from the poor region.
        Panel (b): holistic review with plus factors tends to benefit URM applicants in the rich region more than these in the poor region.
        Panel (c): top-percentage plan transfer admission opportunity from the rich region to the poor region, and the redistribution is proportional to the natural region-specific demographic compositions.
    }
    \label{fig:theoretical_analysis}
\end{figure*}

\subsection{Additional Theoretical Result Related to Holistic Review with Plus Factors}
For the purpose of deriving a closed-from characterization, we introduce an additional quantitative assumption:
\begin{assumption}[Gamma Parameterization of Academic Preparedness Distribution]\label{asmp:gamma_parameterization}
    Let $S$ denote the non-negative overall academic index score of an applicant's academic preparedness.
    Further let $S_{\text{MAX}}$ and $S_{\text{MIN}}$ denote the highest and lowest possible values of the score.
    Within any specific region $r \in \{\mathrm{poor}, \mathrm{rich}\}$, the log-converted relative score $Q$ is Gamma distributed with region-specific shape and scale parameters, $k^{(r)}$ and $\theta^{(r)}$, respectively.
    Furthermore, the rich region's cumulative distribution function (CDF) of log-converted relative score $Q$ dominates that of the poor region:
    \begin{equation*}
        \begin{split}
             & Q \sim \Gamma\big(k^{(r)}, \theta^{(r)}\big), \text{ where } Q \coloneqq - \log\bigg(
            \frac{S - S_{\text{MIN}}}{S_{\text{MAX}} - S_{\text{MIN}}}
            \bigg),                                                                                                                        \\
             & \forall q \in [0, \infty), ~ F^{(\mathrm{rich})}(q) \geq F^{(\mathrm{poor})}(q),                                            \\
             & \text{ where } F^{(r)}(q) \text{ is the CDF of } \Gamma\big(k^{(r)}, \theta^{(r)}\big), r \in \{\text{poor}, \text{rich}\}.
        \end{split}
    \end{equation*}
\end{assumption}
\looseness=-1
In Assumption~\ref{asmp:gamma_parameterization}, the conversion of the score maps the domain of values $[S_{\text{MIN}}, S_{\text{MAX}}]$ (higher score $S$ is more competitive) to $[0, \infty)$, where the closer to $0$ the converted score $Q$, the more competitive.
We selected Gamma distributions for their flexibility in capturing the one-sided skew commonly observed in academic performance distributions, consistent with prior educational assessment research \citep{campos2013distributional,sari2025simulation}.

\looseness=-1
Putting aside the evolving jurisprudence \citep{grutter2003,harvard2023,unc2023}, we aim to precisely characterize holistic review in terms of its implications on the distribution of benefits and burdens among individuals, when allocating the limited spots in selective college admissions.
When taking into account of social determinants signified by \texttt{Region}, we show that holistic review with plus factors may benefit applicants from better-off areas more than those from less well-off areas:

\begin{theorem}[Holistic Review with Plus Factors Benefits URM in Rich Region More]\label{thm:plus_factor}
    Under Assumptions~\ref{asmp:region_specific_makeup}--\ref{asmp:selective_admission_open_enrollment} and also \ref{asmp:gamma_parameterization}, let us denote with $\eta_{\dagger} < 1$ the multiplicative coefficient on the scale parameter of Gamma distributions for URM applicants' academic index scores, such that the perceived scores of URM applicants shift more probability density towards the high-score end.
    Let us denote with $q_{o}$ the default threshold for selective admission, and with $q_{\dagger}$ the threshold if the admission procedure is a holistic review with plus factors.
    Further assume that region-specific shape parameters satisfy $k^{(\mathrm{poor})} = k^{(\mathrm{rich})} = k_{o}$.
    Then, the increase in the probability of selective admission for URM applicants from the rich region, is larger than that for URM applicants from the poor region:
    \begin{equation*}
        \begin{split}
             & \text{if the selective admission is limited in availability such that }q_{o} < \frac{k_{o} \ln (\nicefrac{\theta^{(\mathrm{poor})}}{\theta^{(\mathrm{rich})}})}{\nicefrac{1}{\theta^{(\mathrm{rich})}} - \nicefrac{1}{\theta^{(\mathrm{poor})}}}, \\
             & \text{then } \forall \eta_{\dagger} \geq \frac{q_{o} (\nicefrac{1}{\theta^{(\mathrm{rich})}} - \nicefrac{1}{\theta^{(\mathrm{poor})}})}{k_{o} \ln (\nicefrac{\theta^{(\mathrm{poor})}}{\theta^{(\mathrm{rich})}})},
            F^{(\mathrm{rich})}\big(\nicefrac{q_{\dagger}}{\eta_{\dagger}}\big) - F^{(\mathrm{rich})}(q_{o}) > F^{(\mathrm{poor})}\big(\nicefrac{q_{\dagger}}{\eta_{\dagger}}\big) - F^{(\mathrm{poor})}(q_{o}).
        \end{split}
    \end{equation*}
\end{theorem}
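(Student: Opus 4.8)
The plan is to reduce the claimed inequality to a single-crossing comparison of the two region-specific Gamma densities and then to show that the relevant integration window lies entirely on the side where the rich-region density dominates.

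First I would translate the probabilities into integrals. Writing $f^{(r)}$ for the density of $\Gamma(k_o, \theta^{(r)})$ (the shapes coincide at $k_o$ here), the plus factor rescales a URM applicant's true score by $\eta_{\dagger}$, so an applicant from region $r$ is admitted under holistic review exactly when the true score satisfies $Q \le q_{\dagger}/\eta_{\dagger}$; hence the increase in admission probability is $F^{(r)}(q_{\dagger}/\eta_{\dagger}) - F^{(r)}(q_o) = \int_{q_o}^{q_{\dagger}/\eta_{\dagger}} f^{(r)}(q)\,dq$. The target inequality thus becomes $\int_{q_o}^{q_{\dagger}/\eta_{\dagger}} \big(f^{(\mathrm{rich})}(q) - f^{(\mathrm{poor})}(q)\big)\,dq > 0$.

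Next I would establish the single-crossing property. Forming the ratio $f^{(\mathrm{rich})}(q)/f^{(\mathrm{poor})}(q) = (\theta^{(\mathrm{poor})}/\theta^{(\mathrm{rich})})^{k_o}\exp[-q(\nicefrac{1}{\theta^{(\mathrm{rich})}} - \nicefrac{1}{\theta^{(\mathrm{poor})}})]$, one checks it exceeds $1$ if and only if $q < q^{\star}$, where $q^{\star} := k_o \ln(\theta^{(\mathrm{poor})}/\theta^{(\mathrm{rich})}) / (\nicefrac{1}{\theta^{(\mathrm{rich})}} - \nicefrac{1}{\theta^{(\mathrm{poor})}})$. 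An essential preliminary sub-step is to fix the ordering of scale parameters: since the shapes are equal, the Gamma CDF is strictly decreasing in the scale at each fixed $q$, so the dominance assumption $F^{(\mathrm{rich})} \ge F^{(\mathrm{poor})}$ forces $\theta^{(\mathrm{rich})} < \theta^{(\mathrm{poor})}$, which makes both $\ln(\theta^{(\mathrm{poor})}/\theta^{(\mathrm{rich})})$ and $\nicefrac{1}{\theta^{(\mathrm{rich})}} - \nicefrac{1}{\theta^{(\mathrm{poor})}}$ positive and hence $q^{\star} > 0$. Consequently $f^{(\mathrm{rich})} - f^{(\mathrm{poor})}$ is strictly positive on $(0, q^{\star})$ and strictly negative on $(q^{\star}, \infty)$.

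Finally I would verify that the entire integration window $[q_o, q_{\dagger}/\eta_{\dagger}]$ sits inside $(0, q^{\star}]$. The lower endpoint obeys $q_o < q^{\star}$, which is literally the stated limited-availability hypothesis. For the upper endpoint, the hypothesis $\eta_{\dagger} \ge q_o/q^{\star}$ rearranges to $q_o/\eta_{\dagger} \le q^{\star}$; combining this with $q_{\dagger} \le q_o$ (introducing a plus factor lets more URM applicants clear any fixed bar, so under the fixed-capacity selective admission of Assumption~\ref{asmp:selective_admission_open_enrollment} the perceived-score threshold can only tighten relative to the default) gives $q_{\dagger}/\eta_{\dagger} \le q_o/\eta_{\dagger} \le q^{\star}$. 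Since the integrand is strictly positive throughout $[q_o, q_{\dagger}/\eta_{\dagger}] \subseteq (0, q^{\star})$, the integral is strictly positive, yielding the claim. I expect the crux to be precisely this last step: the likelihood-ratio computation is routine, but one must argue that the effective URM threshold $q_{\dagger}/\eta_{\dagger}$ does not spill past the crossing point $q^{\star}$, beyond which the poor-region density dominates and could reverse the comparison. This is exactly what the lower bound on $\eta_{\dagger}$ is engineered to rule out, leaning on the monotone effect of the plus factor on the threshold; were $q_{\dagger} \le q_o$ not available, one would instead have to weigh the positive mass on $[q_o, q^{\star}]$ against the negative mass on $[q^{\star}, q_{\dagger}/\eta_{\dagger}]$, a strictly more delicate estimate that I would sidestep by invoking the capacity constraint.
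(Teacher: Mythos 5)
Your proof is correct and follows essentially the same route as the paper's: the scaling identity $F^{(r)}_{\dagger}(q) = F^{(r)}(q/\eta_{\dagger})$, the single intersection point $q^{\star}$ of the two equal-shape Gamma densities, and the observation that the hypotheses on $q_{o}$ and $\eta_{\dagger}$ confine the comparison window $[q_{o}, q_{\dagger}/\eta_{\dagger}]$ below $q^{\star}$. You actually make the final step more explicit than the paper (which only gestures at CDF dominance) by writing the difference of increments as $\int_{q_{o}}^{q_{\dagger}/\eta_{\dagger}} (f^{(\mathrm{rich})} - f^{(\mathrm{poor})})\,dq$ with a single-crossing integrand; the only point left implicit is that the window is non-degenerate, i.e. $q_{\dagger}/\eta_{\dagger} > q_{o}$, which follows from the same fixed-capacity argument you already invoked to get $q_{\dagger} \le q_{o}$.
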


\begin{proof}
    The holistic review with plus factors changes the scale parameter of the Gamma distribution corresponding to URM applicants' academic index scores, from the original scale, i.e., $\Gamma(k_{o}, \theta^{(r)})$, to the plus-factor scale, i.e., $\Gamma(k_{o}, \eta_{\dagger} \cdot \theta^{(r)})$, where $r \in \{\text{poor}, \text{rich}\}$.
    The admission procedure does not change how non-URM applicants' scores are perceived, i.e., it remains at the original scale, $\Gamma(k_{o}, \theta^{(r)})$.

    Then, we can calculate the default threshold $q_{o}$ and that when the admission strategy is employed, $q_{\dagger}$, as follows:
    \begin{equation}\label{supp:equ:calc_plus_factor_q_o}
        (n_{a}^{(\mathrm{poor})} + n_{a'}^{(\mathrm{poor})}) \cdot F^{(\mathrm{poor})} (q_{o})
        + (n_{a}^{(\mathrm{rich})} + n_{a'}^{(\mathrm{rich})}) \cdot F^{(\mathrm{rich})} (q_{o}) = g,
    \end{equation}
    \begin{equation}\label{supp:equ:calc_plus_factor_q_dagger}
        n_{a}^{(\mathrm{poor})} \cdot F_{\dagger}^{(\mathrm{poor})}(q_{\dagger})
        + n_{a}^{(\mathrm{rich})} \cdot F_{\dagger}^{(\mathrm{rich})}(q_{\dagger})
        + n_{a'}^{(\mathrm{poor})} \cdot F^{(\mathrm{poor})}(q_{\dagger})
        + n_{a'}^{(\mathrm{rich})} \cdot F^{(\mathrm{rich})}(q_{\dagger}) = g,
    \end{equation}
    where $F^{(r)}(\cdot)$ is the CDF of $\Gamma(k_{o}, \theta^{(r)})$, and $F_{\dagger}^{(r)}(\cdot)$ is that of $\Gamma(k_{o}, \eta_{\dagger} \cdot \theta^{(r)})$.

    Because of the numerical property of Gamma CDF's, we have:
    \begin{equation}
        \forall q \in [0, \infty), \quad F_{\dagger}^{(r)}(q)
        = \frac{1}{\Gamma(k)}\gamma\big(k_{o}, \frac{q}{\eta_{\dagger} \cdot \theta^{(r)}}\big)
        = \frac{1}{\Gamma(k)}\gamma\big(k_{o}, \frac{\nicefrac{q}{\eta_{\dagger}}}{\theta^{(r)}}\big)
        = F^{(r)}\big(\frac{q}{\eta_{\dagger}}\big),
    \end{equation}
    where $\gamma(\cdot, \cdot)$ is the incomplete gamma function.
    In other words, when employing holistic review with plus factors, having the same threshold $q_{\dagger}$ operating on $F_{\dagger}^{(r)}(\cdot)$ for URM applicants and $F^{(r)}(\cdot)$ for non-URM applicants, is equivalent to having a threshold $q_{\dagger} / \eta_{\dagger}$ for URM applicants and $q_{\dagger}$ for non-URM applicants but operating only on $F^{(r)}(\cdot)$, where $q_{\dagger} / \eta_{\dagger} > q_{o} > q_{\dagger}$.

    Since $k^{(\mathrm{poor})} = k^{(\mathrm{rich})} = k_{o}$, the two PDF curves only have one intersecting point:
    \begin{equation}
        \begin{split}
            \frac{1}{\Gamma(k_{o}) (\theta^{(\mathrm{poor})})^{k_{o}}} q^{k_{o}-1} e^{-q / \theta^{(\mathrm{poor})}}
                             & = \frac{1}{\Gamma(k_{o}) (\theta^{(\mathrm{rich})})^{k_{o}}} q^{k_{o}-1} e^{-q / \theta^{(\mathrm{rich})}}                             \\
            \implies \quad q & = \frac{k_{o} \ln (\theta^{(\mathrm{poor})} / \theta^{(\mathrm{rich})})}{1 / \theta^{(\mathrm{rich})} - 1 / \theta^{(\mathrm{poor})}}.
        \end{split}
    \end{equation}
    Then, when the selective admission availability is limited such that $q_{o} < \frac{k_{o} \ln (\theta^{(\mathrm{poor})} / \theta^{(\mathrm{rich})})}{1 / \theta^{(\mathrm{rich})} - 1 / \theta^{(\mathrm{poor})}}$, because of the CDF dominance of the rich region over the poor region (Assumption~\ref{asmp:gamma_parameterization}), and that we can equivalently compare thresholds $q_{\dagger} / \eta_{\dagger} > q_{o} > q_{\dagger}$ at the original-scale CDF $F^{(r)}(\cdot)$, we have:
    \begin{equation*}
        \forall \eta_{\dagger} \in \left[\frac{q_{o} (1 / \theta^{(\mathrm{rich})} - 1 / \theta^{(\mathrm{poor})})}{k_{o} \ln (\theta^{(\mathrm{poor})} / \theta^{(\mathrm{rich})})}, 1\right), F^{(\mathrm{rich})}\big(\frac{q_{\dagger}}{\eta_{\dagger}}\big) - F^{(\mathrm{rich})}(q_{o}) > F^{(\mathrm{poor})}\big(\frac{q_{\dagger}}{\eta_{\dagger}}\big) - F^{(\mathrm{poor})}(q_{o}).
    \end{equation*}
\end{proof}

Theorem~\ref{thm:plus_factor} characterizes different levels of benefits for URM applicants from different regions.
Specifically, in terms of the increase in admission probability to the selective college, URM applicants from the rich region benefit more from the admission procedure that utilizes holistic review with plus factors, compared to URM applicants from the poor region.
To better demonstrate our theoretical result, we provide illustrations in Figure~\ref{fig:theoretical_analysis:plus_factor}.

As presented in top-row subfigures in Figure~\ref{fig:theoretical_analysis:plus_factor}, at the original scale, the region-specific distributions of academic preparedness are the same for URM and non-URM applicants (Assumption~\ref{asmp:determinant_academic_preparedness}).
Holistic review with plus factors grants preference to URM applicants by perceiving their scores, at the distribution level, as if they were sampled from a distribution that is more concentrated at the high-score end (the plus-factor scale).
Because of the limited availability in selective admissions, the threshold $q_{\dagger}$ for admission under holistic review with plus factors is more competitive than the default $q_{o}$, i.e., $q_{\dagger} < q_{o}$, for both URM and non-URM applicants.
While non-URM applicants are assessed on the original scale, URM applicants are evaluated on a plus-factor scale.
Under the Gamma parameterization (Assumption~\ref{asmp:gamma_parameterization}), this is equivalent to employing a more competitive threshold $q_{\dagger}$ for non-URM applicants but a less competitive one $q_{\dagger} / \eta_{\dagger}$ for URM applicants, where $q_{\dagger} < q_{o} < q_{\dagger} / \eta_{\dagger}$.
Although the mathematical form of $q_{o} < \nicefrac{k_{o} \ln (\theta^{(\mathrm{poor})} / \theta^{(\mathrm{rich})})}{(\nicefrac{1}{\theta^{(\mathrm{rich})}} - \nicefrac{1}{\theta^{(\mathrm{poor})}})}$ appears convoluted, the condition itself is relatively mild.
Graphically speaking, the spots at the selective college are limited such that the threshold $q_{o}$ does not reach the point where region-specific Gamma density curves (in the original scale) intersect, as depicted by $\widetilde{q}$ in Figure~\ref{fig:theoretical_analysis:plus_factor}.

From the shaded areas in bottom-row subfigures in Figure~\ref{fig:theoretical_analysis:plus_factor}, we can see that the increased admission probability for URM groups comes with a corresponding reduction in that for non-URM groups.
However, such redistribution benefits URM applicants in the rich region more than those in the poor region, essentially disadvantaging URM applicants in less socio-economically advantaged areas.

\subsection{Additional Theoretical Result Related to the Top-Percentage Plans}
Taking into account the demographic composition of applicants and the number of available spots at the selective college, we characterize the difference between top-percentage plans compared to the default selective admission.
When explicitly considering the role of \texttt{Region} in applicants' academic preparedness, we show that the redistribution of limited selective admissions, as implied by top-percentage plans, is carried out by reallocating availability from the rich region to the poor region, regardless of the demographic group of applicants:
\begin{theorem}[Top-Percentage Plans Reallocate Spots from Rich Region to Poor Region]\label{thm:top_percentage}
    Under Assumptions~\ref{asmp:region_specific_makeup}--\ref{asmp:selective_admission_open_enrollment} and also \ref{asmp:gamma_parameterization}, let us denote with $q_{o}$ the default threshold for selective admission, and with $q^{(\mathrm{poor})}$ and $q^{(\mathrm{rich})}$ the thresholds for poor and rich regions, respectively, if top-percentage plans are employed.
    Then, the increase in selective admissions (in terms of counts) for applicants from the poor region, comes from spots reallocated out of the rich region.
    This redistribution is a result of the top-percentage plans, and is not relevant to applicants' demographic group:
    \begin{equation*}
        \begin{split}
            \big(n_a^{(\mathrm{poor})} + n_{a'}^{(\mathrm{poor})}\big) \big[F^{(\mathrm{poor})}(q^{(\mathrm{poor})}) - F^{(\mathrm{poor})}(q^{(o)})\big]
            = \big(n_a^{(\mathrm{rich})} + n_{a'}^{(\mathrm{rich})}\big) \big[F^{(\mathrm{rich})}(q^{(o)}) - F^{(\mathrm{rich})}(q^{(\mathrm{rich})})\big].
        \end{split}
    \end{equation*}
    Furthermore, if region-specific shape parameters satisfy $k^{(\mathrm{poor})} = k^{(\mathrm{rich})}$, we additionally have:
    \begin{equation*}
        q^{(\mathrm{poor})} / q^{(\mathrm{rich})} = \theta^{(\mathrm{poor})} / \theta^{(\mathrm{rich})}.
    \end{equation*}
\end{theorem}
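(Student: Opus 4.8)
The plan is to reduce both admission schemes to a single region-level accounting identity and then read off both claims. First I would invoke Assumption~\ref{asmp:determinant_academic_preparedness}: conditioning on \texttt{Address Region}, academic preparedness is independent of race, so within region $r$ both URM and non-URM applicants share the common score CDF $F^{(r)}$. This lets me aggregate to region totals $n^{(r)} := n_a^{(r)} + n_{a'}^{(r)}$ and work entirely at the region level. Since the conversion $Q$ is oriented so that smaller $Q$ is more competitive, admitting everyone below a threshold $q$ yields a region-$r$ admitted count of $n^{(r)} F^{(r)}(q)$, and by Assumption~\ref{asmp:selective_admission_open_enrollment} every scheme fills exactly $g$ spots.

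For the balance identity, I would write the capacity constraint for the default single-threshold scheme, $g = n^{(\mathrm{poor})} F^{(\mathrm{poor})}(q_o) + n^{(\mathrm{rich})} F^{(\mathrm{rich})}(q_o)$, and the analogous one for the top-percentage scheme. The defining feature of a top-percentage plan is that it admits one common fraction $p$ from each region, i.e. $F^{(\mathrm{poor})}(q^{(\mathrm{poor})}) = F^{(\mathrm{rich})}(q^{(\mathrm{rich})}) = p$; requiring it to fill $g$ spots forces $p = g/n$. Subtracting the two capacity equations—both equal to $g$—and rearranging immediately yields the claimed identity. To fix the directions (poor region gains, rich region loses), I would note that $g/n$ is a convex combination of $F^{(\mathrm{poor})}(q_o)$ and $F^{(\mathrm{rich})}(q_o)$, and the stochastic-dominance hypothesis of Assumption~\ref{asmp:gamma_parameterization} gives $F^{(\mathrm{poor})}(q_o) \le F^{(\mathrm{rich})}(q_o)$; hence $F^{(\mathrm{poor})}(q_o) \le g/n \le F^{(\mathrm{rich})}(q_o)$, and monotonicity of each CDF gives $q^{(\mathrm{rich})} \le q_o \le q^{(\mathrm{poor})}$, so the left side (poor gain) and the right side (rich loss) are both nonnegative.

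For the ratio relation under $k^{(\mathrm{poor})} = k^{(\mathrm{rich})} = k$, I would exploit that a Gamma family with fixed shape is a scale family: $F^{(r)}(q) = G_k(q/\theta^{(r)})$, where $G_k$ is the CDF of $\Gamma(k,1)$, so the standardized threshold $q/\theta^{(r)}$ carries a region-independent distribution. The equal-fraction condition $F^{(\mathrm{poor})}(q^{(\mathrm{poor})}) = F^{(\mathrm{rich})}(q^{(\mathrm{rich})})$ then reads $G_k(q^{(\mathrm{poor})}/\theta^{(\mathrm{poor})}) = G_k(q^{(\mathrm{rich})}/\theta^{(\mathrm{rich})})$, and since $G_k$ is strictly increasing on $(0,\infty)$ it is injective, forcing $q^{(\mathrm{poor})}/\theta^{(\mathrm{poor})} = q^{(\mathrm{rich})}/\theta^{(\mathrm{rich})}$, which rearranges to $q^{(\mathrm{poor})}/q^{(\mathrm{rich})} = \theta^{(\mathrm{poor})}/\theta^{(\mathrm{rich})}$.

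The main obstacle is not analytic but definitional: the argument hinges entirely on formalizing the top-percentage plan as admitting one common fraction per region and on recognizing that the capacity constraint pins that fraction to $g/n$. Once this is stated cleanly, the first claim is pure conservation bookkeeping and the second is the scale-family/injectivity observation; no incomplete-Gamma estimates or delicate inequalities are required. I would therefore front-load a precise definition of the top-percentage thresholds and verify $0 < g/n < 1$ (guaranteed by $g < n$ in Assumption~\ref{asmp:selective_admission_open_enrollment}) so that the thresholds and the standardization are well-defined.
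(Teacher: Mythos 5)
Your proposal is correct and follows essentially the same route as the paper's proof: write the capacity constraint for the default threshold and for the top-percentage scheme, use the equal-admitted-fraction condition $F^{(\mathrm{poor})}(q^{(\mathrm{poor})}) = F^{(\mathrm{rich})}(q^{(\mathrm{rich})}) = g/n$ to subtract the two and obtain the balance identity, then use the fixed-shape Gamma scale-family property to get $q^{(\mathrm{poor})}/\theta^{(\mathrm{poor})} = q^{(\mathrm{rich})}/\theta^{(\mathrm{rich})}$. Your extra step verifying the sign of the reallocation via the convex-combination bound $F^{(\mathrm{poor})}(q_o) \le g/n \le F^{(\mathrm{rich})}(q_o)$ is a small addition the paper leaves implicit, but it does not change the argument.
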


\begin{proof}
    Top-percentage plans distribute the limited availability of selective admissions in a way that guarantee admissions to top-percentage applicants in their regions, and the resulting thresholds are region-specific.
    Then, we can calculate the default threshold $q_{o}$ and the region-specific thresholds when top-percentage plans are employed:
    \begin{equation}\label{supp:equ:calc_top_percentage_q_o}
        (n_{a}^{(\mathrm{poor})} + n_{a'}^{(\mathrm{poor})}) \cdot F^{(\mathrm{poor})} (q_{o})
        + (n_{a}^{(\mathrm{rich})} + n_{a'}^{(\mathrm{rich})}) \cdot F^{(\mathrm{rich})} (q_{o}) = g,
    \end{equation}
    \begin{equation}\label{supp:equ:calc_top_percentage_q_r}
        \begin{split}
             & (n_{a}^{(\mathrm{poor})} + n_{a'}^{(\mathrm{poor})}) \cdot F^{(\mathrm{poor})} (q^{(\mathrm{poor})})
            + (n_{a}^{(\mathrm{rich})} + n_{a'}^{(\mathrm{rich})}) \cdot F^{(\mathrm{rich})} (q^{(\mathrm{rich})}) = g,                                                                                                                      \\
             & \text{where} \quad F^{(\mathrm{poor})} (q^{(\mathrm{poor})}) = F^{(\mathrm{rich})} (q^{(\mathrm{rich})}) = \frac{g}{n_{a}^{(\mathrm{poor})} + n_{a}^{(\mathrm{rich})} + n_{a'}^{(\mathrm{poor})} + n_{a'}^{(\mathrm{rich})}}.
        \end{split}
    \end{equation}

    Compare \eqnref{supp:equ:calc_top_percentage_q_o} and \eqnref{supp:equ:calc_top_percentage_q_r}, we have:
    {\small \begin{equation*}
        \big(n_a^{(\mathrm{poor})} + n_{a'}^{(\mathrm{poor})}\big) \big[F^{(\mathrm{poor})}(q^{(\mathrm{poor})}) - F^{(\mathrm{poor})}(q^{(o)})\big]
        = \big(n_a^{(\mathrm{rich})} + n_{a'}^{(\mathrm{rich})}\big) \big[F^{(\mathrm{rich})}(q^{(o)}) - F^{(\mathrm{rich})}(q^{(\mathrm{rich})})\big].
    \end{equation*}}

    Because of the numerical property of Gamma CDF's (as we have seen in the proof for Theorem~\ref{thm:plus_factor}), when region-specific shape parameters satisfy $k^{(\mathrm{poor})} = k^{(\mathrm{rich})} = k$, we have:
    \begin{equation*}
        \begin{split}
            F^{(\mathrm{poor})} (q^{(\mathrm{poor})}) & = \frac{1}{\Gamma(k)}\gamma\big(k, \frac{q^{(\mathrm{poor})}}{\theta^{(\mathrm{poor})}}\big), \\
            F^{(\mathrm{rich})} (q^{(\mathrm{rich})}) & = \frac{1}{\Gamma(k)}\gamma\big(k, \frac{q^{(\mathrm{rich})}}{\theta^{(\mathrm{rich})}}\big),
        \end{split}
    \end{equation*}
    together with \eqnref{supp:equ:calc_top_percentage_q_r}, and we have:
    \begin{equation*}
        F^{(\mathrm{poor})} (q^{(\mathrm{poor})}) = F^{(\mathrm{rich})} (q^{(\mathrm{rich})})
        \implies \frac{q^{(\mathrm{poor})}}{\theta^{(\mathrm{poor})}} = \frac{q^{(\mathrm{rich})}}{\theta^{(\mathrm{rich})}}, \text{ i.e., } \frac{q^{(\mathrm{poor})}}{q^{(\mathrm{rich})}} = \frac{\theta^{(\mathrm{poor})}}{\theta^{(\mathrm{rich})}}.
    \end{equation*}
\end{proof}

\looseness=-1
Theorem~\ref{thm:top_percentage} characterizes the reallocation of the selective admission spots performed by top-percentage plans.
In Figure~\ref{fig:theoretical_analysis:top_percentage}, we use shaded areas to illustrate the transfer of admission opportunity (in terms of the region-wise probability of selective admission) from the rich region to the poor region.
The additional selective admissions gained by the poor region, compared to default setting, are distributed proportionally to the natural demographic composition of each group.
\section{Additional Data Analyses on U.S. Census Data}\label{supp:crosswalk}
In this section, we first demonstrate that enforcing group fairness metric while including social determinants (or their proxies) as features does not directly address structural injustice.
Then, we present the age structure, racial composition, occupation distribution in different PUMAs.

\begin{table}[bp]
    \centering
    \small
    \caption{Effect of DP Enforcement on Accuracy and Racial Disparity}
    \label{tab:dp-enforcement}
    \begin{tabular}{lcccc}
        \toprule
        Predictor                       & \makecell{Accuracy\\(baseline)} & \makecell{Accuracy\\(DP enforced)} & \makecell{DP metric\\(baseline)} & \makecell{DP metric\\(DP enforced)} \\
        \midrule
        Logistic Regression (w/out ADI) & 0.785                           & 0.742                              & 0.130                            & 0.012                               \\
        Logistic Regression (w/ ADI)    & 0.784                           & 0.736                              & 0.130                            & 0.017                               \\
        Random Forest (w/out ADI)       & 0.806                           & 0.804                              & 0.115                            & 0.025                               \\
        Random Forest (w/ ADI)          & 0.806                           & 0.807                              & 0.120                            & 0.014                               \\
        XGBoost (w/out ADI)             & 0.814                           & 0.810                              & 0.140                            & 0.025                               \\
        XGBoost (w/ ADI)                & 0.817                           & 0.814                              & 0.148                            & 0.025                               \\
        \bottomrule
    \end{tabular}
\end{table}

\begin{table}[t]
    \centering
    \small
    \caption{Racial Disparity Stratified by Social Determinant After DP Enforcement}
    \label{tab:dp-stratified}
    \begin{tabular}{llcccc}
        \toprule
        Predictor                                        & ADI Region & African American & White & Asian & DP metric \\
        \midrule
        \multirow{3}{*}{Logistic Regression (w/out ADI)} & Low        & 0.697            & 0.693 & 0.690 & 0.008     \\
                                                         & Mid        & 0.642            & 0.620 & 0.583 & 0.059     \\
                                                         & High       & 0.545            & 0.566 & 0.455 & 0.111     \\
        \midrule
        \multirow{3}{*}{Logistic Regression (w/ ADI)}    & Low        & 0.762            & 0.722 & 0.734 & 0.040     \\
                                                         & Mid        & 0.652            & 0.648 & 0.594 & 0.058     \\
                                                         & High       & 0.463            & 0.540 & 0.364 & 0.176     \\
        \midrule
        \multirow{3}{*}{Random Forest (w/out ADI)}       & Low        & 0.651            & 0.713 & 0.693 & 0.062     \\
                                                         & Mid        & 0.622            & 0.615 & 0.552 & 0.070     \\
                                                         & High       & 0.545            & 0.549 & 0.455 & 0.094     \\
        \midrule
        \multirow{3}{*}{Random Forest (w/ ADI)}          & Low        & 0.679            & 0.730 & 0.709 & 0.051     \\
                                                         & Mid        & 0.630            & 0.602 & 0.552 & 0.077     \\
                                                         & High       & 0.480            & 0.490 & 0.443 & 0.048     \\
        \midrule
        \multirow{3}{*}{XGBoost (w/out ADI)}             & Low        & 0.725            & 0.698 & 0.681 & 0.044     \\
                                                         & Mid        & 0.625            & 0.600 & 0.536 & 0.089     \\
                                                         & High       & 0.504            & 0.517 & 0.419 & 0.098     \\
        \midrule
        \multirow{3}{*}{XGBoost (w/ ADI)}                & Low        & 0.688            & 0.730 & 0.708 & 0.041     \\
                                                         & Mid        & 0.627            & 0.594 & 0.512 & 0.116     \\
                                                         & High       & 0.504            & 0.477 & 0.376 & 0.129     \\
        \bottomrule
    \end{tabular}
\end{table}

\subsection{Sanity Check: Enforcing Demographic Parity with Social Determinants as Features}
We consider the \emph{Demographic Parity} \citep{calders2009building,dwork2012fairness,zemel2013learning,feldman2015certifying} notion of group fairness and a U.S. Census Data-based income prediction task \citep{ding2021retiring}.
We consider a a $2 \times 2 \times 3$ design: whether Demographic Parity (DP) with respect to race is enforced, and whether a social determinant (we use ADI) is included as a feature, across 3 model classes (logistic regression, random forest, and XGBoost).

From Tables~\ref{tab:dp-enforcement} and \ref{tab:dp-stratified}, we can see that (1) including ADI as a feature does not consistently improve DP, which can sometimes worsen observed DP violations, and (2) when outcomes are stratified by social determinant, disparities become more pronounced.

Both findings are expected and support our position: standard sensitive-attribute-based fairness metrics effectively smooth over structural contexts, thereby obscuring variation across them. Moreover, including social determinants as predictive features, even when appropriate, is not equivalent to treating them as an evaluative axis for fairness.

\subsection{Age Structure of Population in PUMAs}
\begin{figure}[t]
    \centering
    \includegraphics[width=\textwidth]{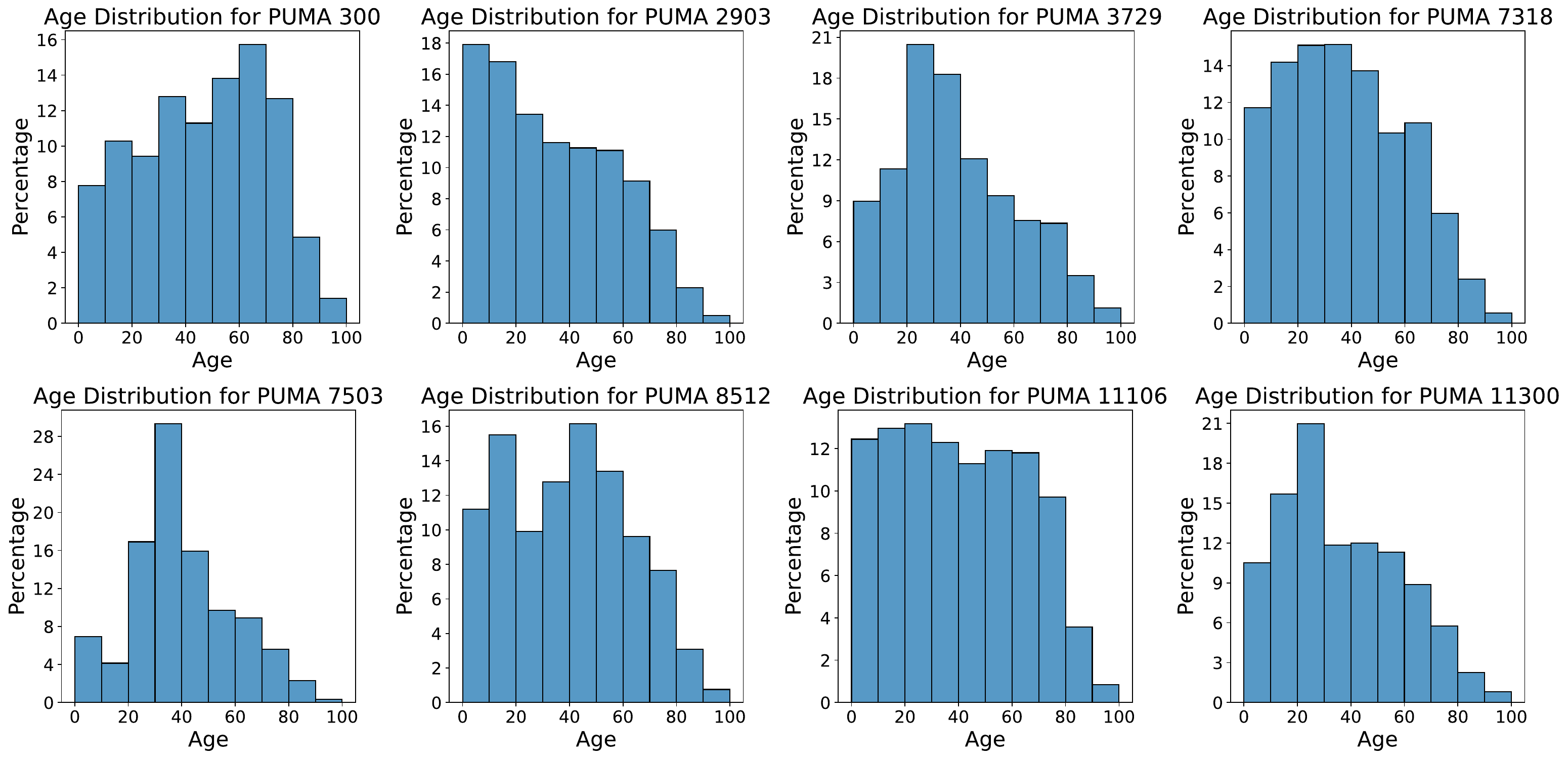}
    \caption{
        Age distribution in different PUMA regions in California based on US Census data.
    }
    \label{suppfig:pums_age_distribution}
\end{figure}
In Figure~\ref{suppfig:pums_age_distribution}, we present age distributions in different PUMAs.
For instance, PUMAs 3729, 7503, 11300 show noticeable concentrations of younger individuals, particularly in the 20--40 age range, suggesting a potentially more dynamic, working-age population which may affect local labor markets and educational demands.
In contrast, PUMAs 7318 and 11106 exhibit a more balanced distribution across age groups, but with a slight skew towards middle-aged populations, which could indicate stable, established communities possibly with higher home ownership and lower school enrollment rates.
For PUMA 8512, there are peaks in the 20s and again in the 50s, represent a mix of young adults possibly associated with entry-level professional work, and also senior adults in established careers or nearing retirement.
The age distribution for PUMA 300 shows a peak around the age of 70s, reflecting a demographic profile with a substantial proportion of senior adults.
Each area's age distribution can profoundly impact local policies, economic conditions, and community services tailored to the dominant age groups' needs.
Therefore, the residents will be positioned differently in terms of social determinants such as educational resources, employment opportunities, and healthcare providers.

\subsection{Racial Composition in PUMAs}
\begin{figure}[t]
    \centering
    \includegraphics[width=\textwidth]{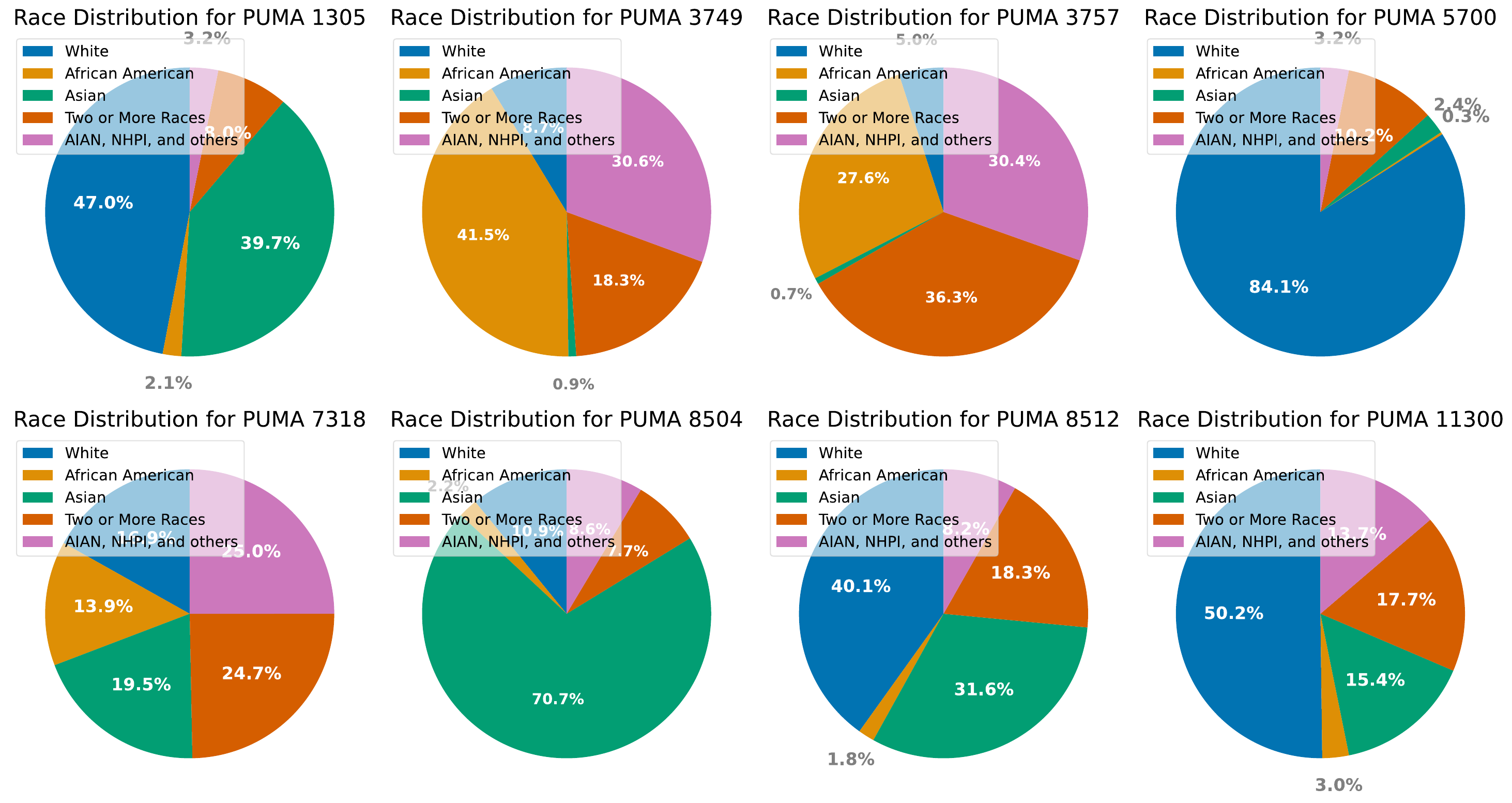}
    \caption{
        Racial composition in various PUMA regions in California based on U.S. Census data.
    }
    \label{suppfig:pums_race_distribution}
\end{figure}
In Figure~\ref{suppfig:pums_race_distribution}, we present racial compositions across PUMAs.
In the context of U.S. Census data, ``Hispanic or Latino(a)'' origin is considered an ethnicity, not a race.
Individuals of Hispanic or Latino(a) origin can be of any race and are often asked to identify both their race and their ethnicity during the data collection.
Therefore, the racial composition does not contain a separate category for Hispanic or Latino(a) individuals.

As we can see, for historical and cultural reasons, the racial compositions vary quite a bit across different regions.
For instance, PUMA 5700 predominantly consists of White individuals, making up $84.1\%$ of its population, indicating a less racially diverse area compared to others.
Similarly, PUMA 8504 displays a vast majority of Asian residents, accounting for $70.7\%$ of the population.
In contrast, PUMA 7318 offers a more balanced racial mix with no single group exceeding more than $30\%$, suggesting a more racially integrated community.
These variations in racial composition can impact community needs, including educational services, cultural programs, and language services, and may influence local policy-making and resource allocation.
Therefore, the association between social determinants and racial composition of the population can differ significantly across regions.

\subsection{Occupation Distribution in PUMAs}
\begin{figure}[tp]
    \centering
    \includegraphics[width=\textwidth]{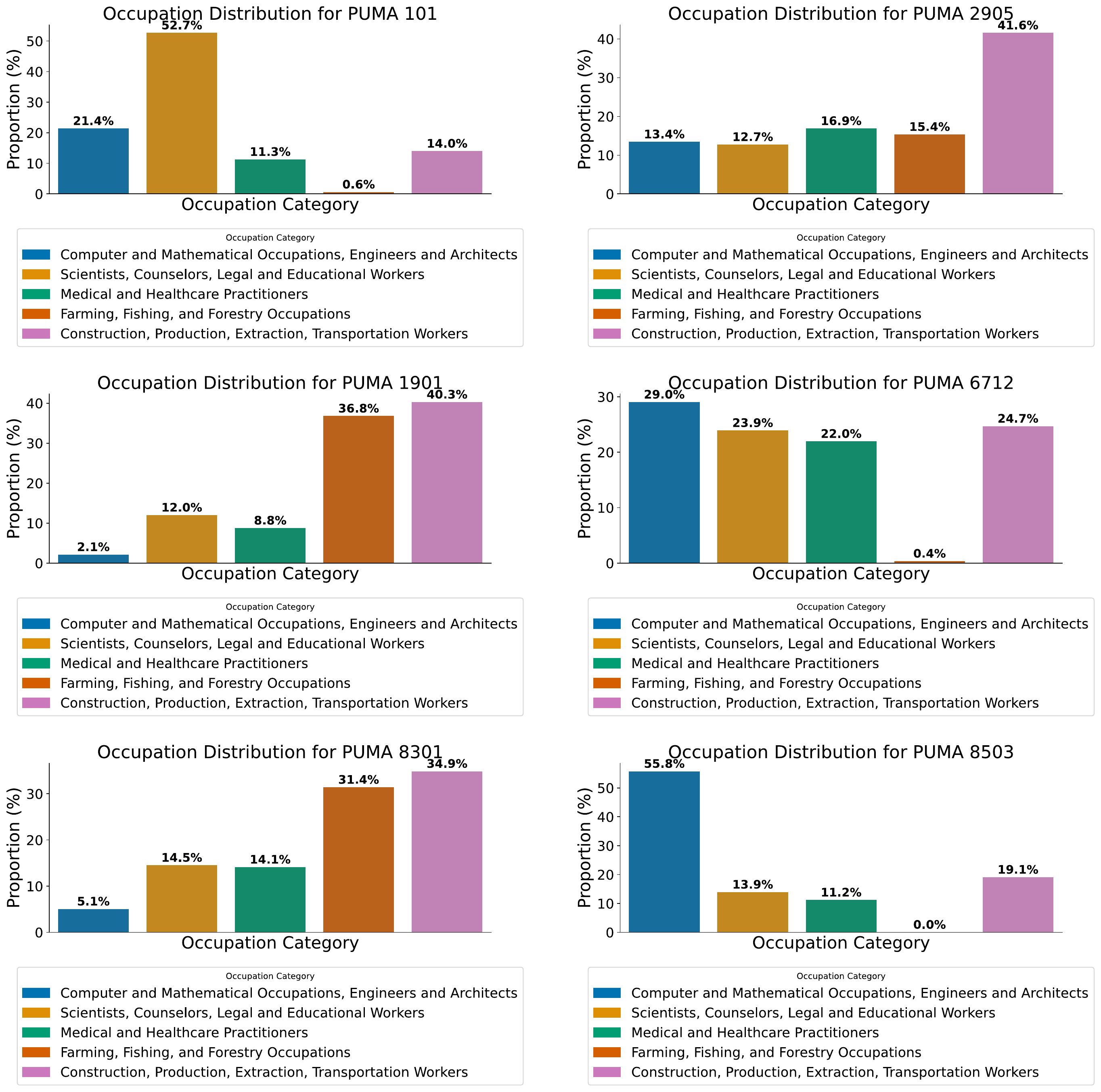}
    \caption{
        Occupational structure in various PUMA regions in California based on U.S. Census data.
    }
    \label{suppfig:pums_occp_distribution}
\end{figure}
In Figure~\ref{suppfig:pums_occp_distribution}, we present distribution of occupations from certain categories in various PUMAs.
The diverse workforce compositions reflect varying regional economic profiles and potential educational infrastructures.
For instance, PUMAs 101 and 8503 display a strong presence of occupations related to science, engineering, education, and so on.
In contrast, PUMA 6712 shows a more balanced distribution across different occupation categories (except for primary industries), suggesting a balanced mix of professional services and healthcare employment sectors.
In terms of the category of farming, fishing, and forestry occupations, PUMAs 1901 and 8301 differ from other PUMAs (e.g., 101 and 8503).
This category forms a significant part of the workforce (more than a third in both 1901 and 8301), reflecting an economy heavily reliant on primary industries.
These patterns highlight how local natural and industrial resources, as well as economies, can significantly influence the occupational structures and, by extension, the training and education needed to support these sectors.
Therefore, the social determinants in different regions can be shaped differently.

\subsection{Combination of Demographic Factors in PUMA}
\begin{figure}[t]
    \centering
    \begin{subfigure}[t]{\textwidth}
        \centering
        \includegraphics[width=0.95\textwidth]{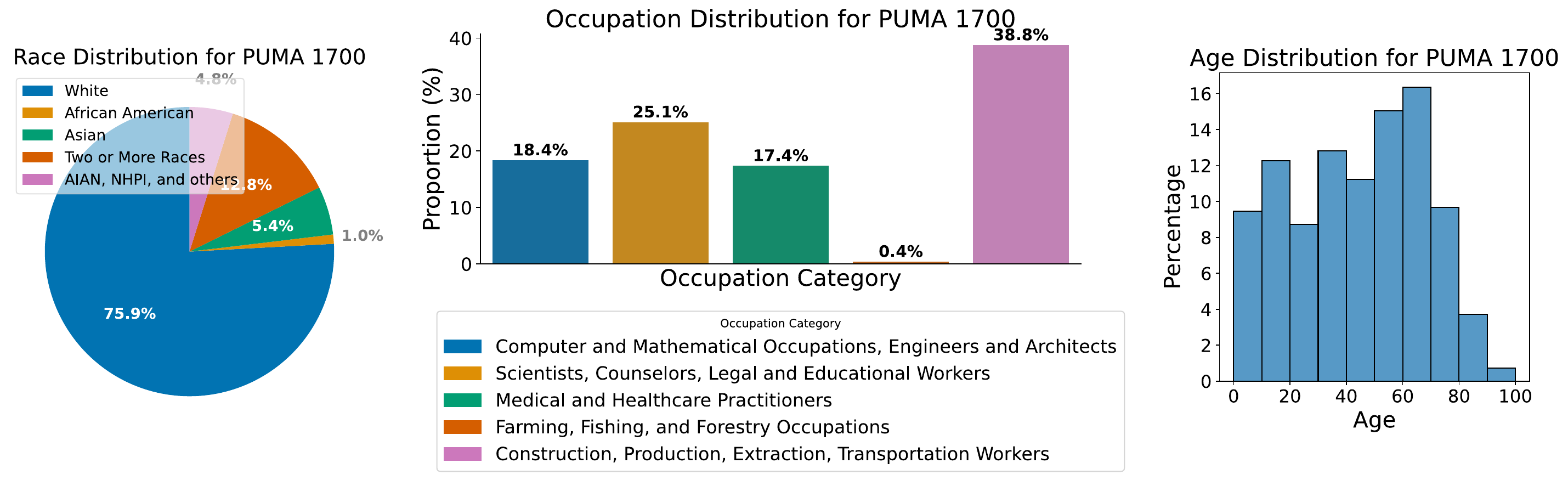}
        \caption{
            PUMA 1700: racial decomposition, occupation distribution, and age structure.
        }
        \label{suppfig:pums_combine_white}
        \vspace{1ex}
    \end{subfigure}
    \begin{subfigure}[t]{\textwidth}
        \centering
        \includegraphics[width=0.95\textwidth]{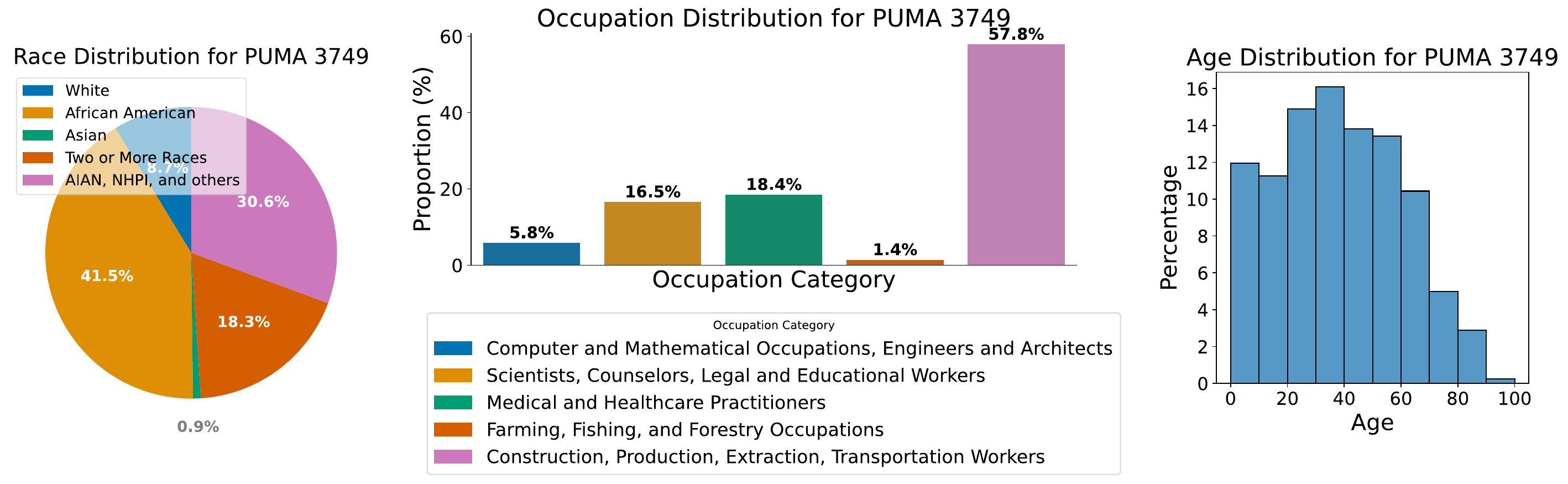}
        \caption{
            PUMA 3749: racial decomposition, occupation distribution, and age structure.
        }
        \label{suppfig:pums_combine_black}
        \vspace{1ex}
    \end{subfigure}
    \begin{subfigure}[t]{\textwidth}
        \centering
        \includegraphics[width=0.95\textwidth]{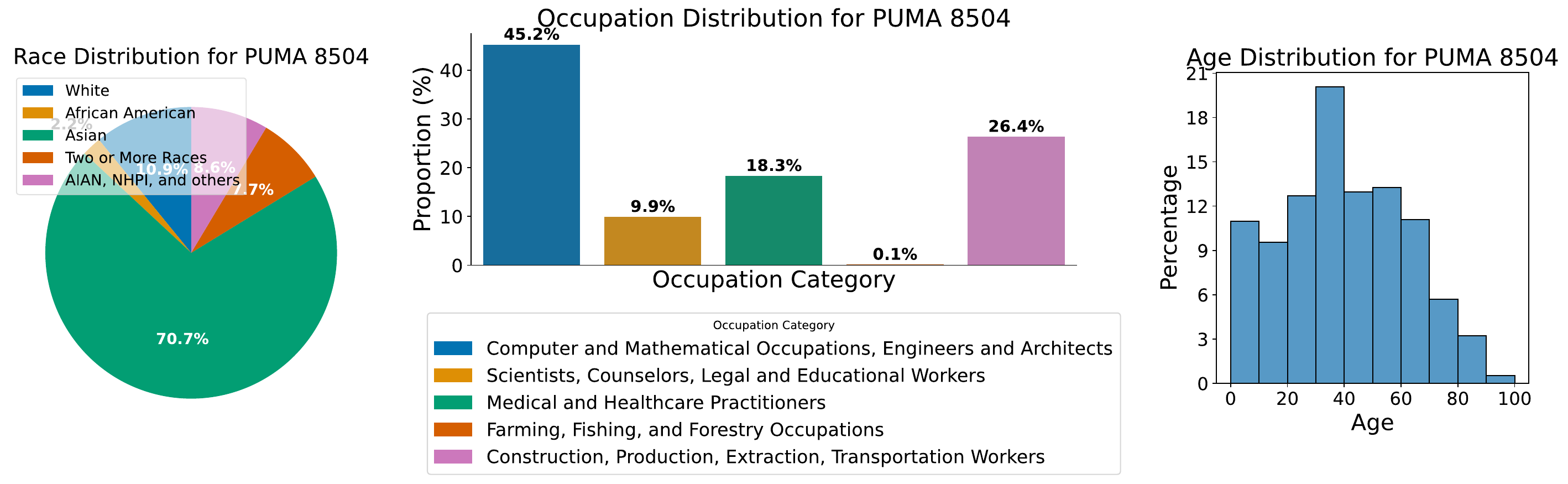}
        \caption{
            PUMA 8504: racial decomposition, occupation distribution, and age structure.
        }
        \label{suppfig:pums_combine_asian}
    \end{subfigure}
    \caption{
        PUMAs with different profiles in terms of residents' age, race, and occupation.
    }
    \label{suppfig:pums_combine}
\end{figure}
In Figure~\ref{suppfig:pums_combine}, we present how PUMAs can have very different profiles in terms of residents' age structure, racial decomposition, and occupation distribution.
In terms of the age structure, PUMAs 3749 and 8504 show more concentrations in the 20--40 age range, while PUMA 1700 has a high proportion of senior adults.
In terms of the race decomposition, the majority of residents are white ($75.9\%$) for PUMA 1700, African American ($41.5\%$), and Asian ($70.7\%$) for PUMA 8504.
In terms of the occupation distribution, while the proportion of medical and healthcare practitioners is similar across the three regions, the occupational structures are very different.
For instance, nearly one half of the working force in PUMA 8504 is within the category of computer and mathematical occupations, while the number is significantly lower in PUMAs 1700 and 3749, with a proportion of $18.4\%$ and $5.8\%$, respectively.
Therefore, in addition to sensitive attributes, the comprehensive understanding of the social determinants in different regions can help inform policy-making and resource allocation decisions, so that we can achieve algorithmic fairness by taking into account both sensitive attributes and social determinants.

\stopcontents[supp]


\end{document}